\newtheorem{theorem}{Theorem}
\newtheorem{lemma}{Lemma}
\newtheorem{prop}{Proposition}
\begin{document}

\title{QoS-Driven Resource Optimization for Intelligent Fog Radio Access Network: A Dynamic Power Allocation Perspective}


\author{\authorblockA{Jun Yu, Rui Wang, \textit{Senior Member, IEEE}, Jun Wu, \textit{Senior Member, IEEE}} 
\thanks{J. Yu, and R. Wang are with the School of Electronics and Information Engineering at Tongji University, Shanghai, 201804, P. R. China (e-mails: 1610986@tongji.edu.cn, ruiwang@tongji.edu.cn). }
\thanks{J. Wu is with the School of Computer Science and Technology at Fudan University, Shanghai, 201203, P. R. China (e-mail: wujun@fudan.edu.cn).}
}
\maketitle

\begin{abstract}
The fog radio access network (Fog-RAN) has been considered a promising wireless access architecture to help shorten the communication delay and relieve the large data delivery burden over the backhaul links. However, limited by conventional inflexible communication design, Fog-RAN cannot be used in some complex communication scenarios. 
In this study, we focus on investigating a more intelligent Fog-RAN to assist the communication in a high-speed railway environment. Due to the train's continuously moving, the communication should be designed intelligently to adapt to channel variation.  
%
Specifically, we dynamically optimize the power allocation in the remote radio heads (RRHs) to minimize the total network power cost considering multiple quality-of-service (QoS) requirements and channel variation. The impact of caching on the power allocation is considered. The dynamic power optimization is analyzed to obtain a closed-form solution in certain cases. The inherent tradeoff among the total network cost, delay and delivery content size is further discussed. To evaluate the performance of the proposed dynamic power allocation, we present an invariant power allocation counterpart as a performance comparison benchmark. The result of our simulation reveals that dynamic power allocation can significantly outperform the invariant power allocation scheme, especially with a random caching strategy or limited caching resources at the RRHs.
\end{abstract}

\begin{IEEEkeywords}
Fog-RAN, caching, dynamic power allocation, quality-of-service.
\end{IEEEkeywords}

\section{Introduction}
As an evolution of the conventional cloud radio access network (C-RAN) \cite{Zhao_JSAC_2016, Yan_access2018}, the fog radio access network (Fog-RAN) has been regarded as a promising new wireless access network architecture, which can help relieve the large data traffic burden in the backhaul links of a cellular network and satisfy the stricter delay requirements in beyond-fifth-generation (B5G) and sixth-generation (6G) wireless communications \cite{Pen_access2016, Zhao_WCM_2020}. The key technology of the Fog-RAN to achieve this performance enhancement is the introduction of caching resources on the edge devices, i.e., the remote radio heads (RRHs), of the access network. With some popular contents cached at the RRHs, the frequently requested data by the terminals can be instantly acquired from the RRHs without needing to be fetched from remote servers. This scenario avoids possible traffic congestion over the backhaul links and further shortens the content delivery delay.

Due to the advantage of the Fog-RAN, the corresponding studies have recently received much attention. Current studies on the Fog-RAN mainly
focus on two aspects: performance characterization and resource optimization. Here, the resources include caching, beamforming, power and computation resources. The authors in \cite{Jiang_TWC_2020} investigated the delay and energy efficiency of the Fog-RAN by considering the hybrid caching strategy, which combines coded cached, nonpartitioned cached and uncached files. The hybrid caching strategy was further optimized to balance the delay and energy efficiency. Successful transmission probability (STP) is often used a performance metric to characterize the impact of content caching at an RRH. In \cite{Wang_TVT2019}, the authors first derived the STP of the Fog-RAN system with a proactive probabilistic caching strategy. The caching probability for different contents was further optimized to maximize the STP. In \cite{Wang_CL_2019}, Fog-RAN-assisted transmission was studied in a heterogeneous Fog-RAN wireless network. Different from \cite{Wang_TVT2019}, the authors in \cite{Wang_CL_2019} optimized the caching strategy in both the RRHs and mobile users, aiming to optimize the STP. The power allocation of the Fog-RAN was studied in \cite{Bai_2019_access, Rahman_TVT_2020, Xiang_TVT_2020, Zhang_AT-RASC2018}. In \cite{Bai_2019_access}, the authors proposed using the nonorthogonal multiple access (NOMA) technique to distinguish multiple users in the Fog-RAN. An improved fractional transmit power allocation algorithm was developed for the three-user NOMA-assisted Fog-RAN system. In \cite{Rahman_TVT_2020}, the latency minimization problem for the Fog-RAN was formulated with a dynamic user demand. To understand the demand of the user and to intelligently perform the joint optimization of the proactive cache strategy and power allocation, a deep reinforcement learning approach was used. The authors in \cite{Xiang_TVT_2020} jointly optimized the power and the model selection for uplink transmission of the Fog-RAN. The reinforcement learning approach was used to solved the nonconvex mixed-integer programming problem. Subchannel assignment and power control were jointly optimized in \cite{Zhang_AT-RASC2018} for a mmWave-based Fog-RAN. The alternative direction method of multipliers (ADMM) was utilized to solve the power control problem.

The beamforming design problem for the Fog-RAN was investigated in \cite{He_TCOM_2019, Tao_TWC_2016, ErkaiTWC_2018}. In particular, the uplink Fog-RAN was studied \cite{He_TCOM_2019}, aiming to maximize the delivery rate subject to the constraints, including the backhaul capacity, transmit power, and file size. A two-layer transmission scheme including the cache level and network level was applied for content transmission. Both the centralized algorithm and the decentralized algorithm were proposed for optimizing the beamforming problem. In \cite{Tao_TWC_2016}, the authors considered the Fog-RAN with multicast transmission. The beamforming design problem was constructed to minimize the network total cost subject to the power and signal-to-interference-plus-noise ratio (SINR) constraints. The sparse beamforming vector was found with the convex-concave procedure. The authors in \cite{ErkaiTWC_2018} further extended the channel model in \cite{Tao_TWC_2016} to a scenario with both multicast and unicast transmission. A branch-and-bound algorithm was utilized to find the global optimal solution.

Computational resource allocation was considered in \cite{Ma_Acess_2020, Khumalo_2020, Li_2020_TWC, Dang_jSAC_2019} for performance improvement. In \cite{Ma_Acess_2020}, edge/cloud computing and edge computing task migration were included in the design problem to improve the quality-of-service (QoS) at the users. Optimizing the user association and computational offloading, the communication resources and computational resources could be balanced. A reinforcement-learning-based approach was used in \cite{Khumalo_2020} to optimize the computational resources with the objective of reducing the latency and energy consumption. In \cite{Li_2020_TWC}, the authors proposed using the computational resources at the edge devices of the Fog-RAN to create cooperative downlink transmission to decrease the latency. An order-optimal upload-download communication latency pair was characterized to evaluate the network performance. In \cite{Dang_jSAC_2019}, the authors formulated a joint decision problem for communication, caching and computing resources. The problem was modeled as a multiple-choice multidimensional knapsack problem, and was then solved by the Lagrangian dual decomposition approach.

As we summarized above, even though a large number of contributions to the Fog-RAN have been reported, the investigations are still not sufficient to address all challenges under different application scenarios. 
%
A key shortcoming of the existing work is that current studied Fog-RAN  are not intelligent enough, leading to that it cannot be used in some complex dynamic communication scenarios. A typical scenario is the 
high-speed railway scenario \cite{Ai_ComM2015}. The technology of the high-speed railway, identified as a typical wireless communication scenario for future cellular networks, has developed rapidly on a global scale \cite{Fan_Access2016}. How to use the Fog-RAN architecture to provide reliable and low-latency communication services is an interesting topic that will require in-depth studies.

To compensate for the deficiencies of the current research on the Fog-RAN, we aim to design a more intelligent Fog-RAN to assist the communication in a high-speed railway environment. 
In specific, we investigate the Fog-RAN-assisted downlink data transmission in the high-speed railway scenario. Because of the time-varying channel state information, the corresponding studies become more challenging. Although wireless communication in the high-speed railway scenario has received much attention \cite{Muneer_TWC2015, Wang_JSAC2012, Li_Access2017, Zhang_TVT2015, Liu_Access2018}, to our knowledge, few works have studied the Fog-RAN in the high-speed railway scenario. This lack motivates the study of this work.

In the considered high-speed railway Fog-RAN, we assume that the train is cooperatively served by multiple RRHs. By taking the time-varying channel into account, we investigate an intelligently dynamic power allocation to minimize the cost of the entire network power, which includes the transmit power at the RRHs and the power consumed over the backhaul links, subject to a few quality-of-service (QoS)
requirements. With this goal, the caching placement on the RRHs affects the physical layer power allocation. To find a reasonable solution to the considered nonconvex problem, smoothed $l_0$-norm approximation is employed to convert the nonconvex problem into a tractable form. By analyzing the unique channel properties of the considered high-speed railway Fog-RAN, we provide a closed-form solution in certain special cases. With the obtained solution, the inherent tradeoff among the total network cost, delay and delivery content size is further discussed. Moreover, the invariant power allocation counterpart is derived to evaluate the performance of the proposed dynamic power allocation. Simulation results reveal that the intended dynamic power allocation is able to significantly outperform the invariant power allocation scheme. The performance gain is more pronounced when the random caching strategy is used or the caching resources at the RRHs are limited.

The organization of the remainder of the paper is shown below. In Section \ref{channel}, we present the channel model as well as problem formulation of the considered high-speed railway Fog-RAN. In Section \ref{Dynamic}, optimization of the dynamic power allocation is discussed. In Section \ref{Tradeoff}, we analyze the tradeoff among the total network cost, delay and delivery content size. In Section \ref{Invariant}, we present the invariant power allocation counterpart. In Section \ref{simulation}, extensive numerical results are stated to demonstrate the performance. Finally, we conclude the paper in Section \ref{conclusion}.

\section{Channel Model and Problem Formulation}\label{channel}

\begin{figure}
\includegraphics[width= 3.5 in]{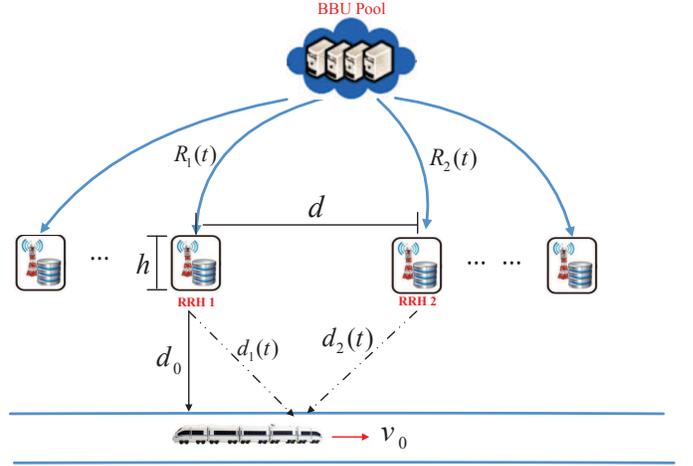}
\caption{Fog-RAN-assisted high-speed communication system.} \label{system_model}
\end{figure}

In this section, we present the channel model of the considered Fog-RAN. After that, the dynamic optimization problem is formulated.

\subsection{Channel model}
As illustrated in Fig. \ref{system_model}, we consider a dynamic wireless transmission scenario of a train running at high speed along a straight-line railway. To provide the required high data rate and low latency downlink transmission, the train is served by a Fog-RAN in which multiple base stations, i.e., RRHs, are uniformly deployed along one side of the road at intervals of the same size. The RRHs are assumed to be connected to the same baseband unit (BBU). The BBU is responsible for the RRH resource allocation as well as the serving-RRH switch. At any time, the train is served by the two nearest RRHs. Because the service repeats periodically for different pairs of neighboring RRHs, with no loss of generality, only one time interval is considered, where the train is assisted by RRH $1$ and RRH $2$.

It is assumed that the contents requested by the train include $L$ different popular contents of the same size $Q$.
All RRHs have a limited local storage size, and all $L$ contents cannot be cached in a single RRH simultaneously. Suppose $F_n$ as the local storage size of RRH $n$; thus, we have $F_n< LQ$. According to the requested frequency of the contents, the popularity distribution of the contents is denoted by ${\bf p}=[p_1, p_2, \cdots, p_L]$, where $p_l\in (0, 1)$ denotes the popularity of content $l$. With unchanged generality, in this study, we make the assumption that $p_1\geq p_2 \geq \cdots \geq p_L$ and that the content popularity follows a Zipf distribution given by $p_l = \frac{l^{-\eta}}{\sum^L_{l=1} l^{-\eta} }$, where $\eta$ denotes the shaping parameter defining the skewness of the popularity distribution \cite{Tao_TWC_2016, Wang_TVT_2019,Wang_CL_2019}. According to the content popularity distribution, the RRHs can store the requested contents with different caching strategies. Two caching strategies that widely used are the popularity-aware caching (PopC) strategy and random caching (RndC) strategy. The PopC strategy allows each RRH to cache the most popular contents until its storage size is fully utilized, while the RndC strategy asks each RRH to cache the contents randomly with identical probabilities regardless of the content popularity distribution. Therefore, to identify the used caching strategy, we define a cache placement matrix $C \in \mathbb{{B}}^{N \times L}$ with $c_{n,l}\in {0, 1}$. Specifically, $c_{n,l} = 1$ occurs if content $l$ is cached in RRH $n$; otherwise, content $l$ is not cached in RRH $n$. To satisfy the RRH storage size constraints, we have $\sum_{l = 1}^{F} c_{n,l} Q \leq F_n, \forall n$.

We consider the Fog-RAN-assisted downlink transmission described in Fig. \ref{system_model}. We assume $d$ equal intervals between every two RRHs that $d_0$ is the distance between each RRH and the road, and that  $h$ is the height of the transmit antenna on each RRH. The train moves down the railway at a unchanging velocity $v$. To determine the coordinates of the RRHs, we assume that there is an original point $o$ and the system time when the train passes through $o$  equals to $0$. The coordinates of RRH $n$ are represented as $(l_n, d_0)$. During the time interval $(0, T]$, RRH $1$ and $2$  serve the train. According to the geometric structure of the system, the distance between RRH $n$ and the train can be denoted by
\begin{equation}\label{eqn_channel_model_0}
d_n(t) = \sqrt{(v t - l_n)^2 + d_0^2+h^2}, ~~t \in (0, T].
\end{equation}
When $t>T$, the BBU coordinates the handoff process, and the train is served by a new set of RRHs. However, because the communications over different time intervals are periodic, with no loss of generality, we next focus only on the dynamic resource allocation over time interval $t\in (0, T]$.

During the time interval $t \in (0, T]$, we assumed that a requested content has to be delivered from the RRHs to the train. Denote $x(t)$ as the modulated signal for the requested content transmitted in the downlink. Here, we assume that signals transmitted from different RRHs are sent over an orthogonal bandwidth. Moreover, assume that signal $x(t)$ is a stochastic process with mean of zero and with unit variance. Thus, the baseband signal transmitted from RRH $n$ at time $t$ can be represented by
\begin{equation}\label{eqn_channel_model_1}
y_n(t) = \sqrt{P_n(t)} h_n(t)x(t) + n_n(t),
\end{equation}
where $P_n(t)$ denotes the instantaneous transmit power at RRH $n$, $h_n(t)$ shows the instantaneous channel coefficient, and $n_n(t)$ represents the additive complex cycle symmetric Gaussian noise at the train following $CN(0, \sigma^2)$. In this study, we assume that the train runs in an open area and that the channel coefficient is dominated by the line-of-sight (LOS) component without any scatter. In this way, the propagation attenuation model can be represented as $h_n(t)=\frac{\sqrt{G}}{d^\alpha_n(t)}$, where $G$ shows the constant channel gain and $\alpha$ represents the path-loss exponent.

At the receiver, the received signals can be combined using a maximal ratio combiner. The instantaneous achievable rate at time $t$ can be given as
\begin{equation}\label{eqn_channel_model_2}
C(t) = B \log_2 \left(1+\sum_{n\in \mathcal{N}} \frac{P_n(t) |h_n(t)|^2}{\sigma^2} \right),
\end{equation}
where $\mathcal{N}=\{1,2\}$ and $B$ denotes the frequency bandwidth that is distributed for every channel between an RRH and the train.

\subsection{Problem formulation}

In the considered Fog-RAN-assisted downlink transmission, if the content requested by the train has been cached at the serving RRH, the serving RRH is able to instantly convey the content to the train; if not, fetching the content from the BBU via backhaul links is required for the RRH , which consumes extra backhaul transmission resources. Denote the instantaneous content delivery rate over the backhaul link between RRH $n$ and the train at time $t$ as $R_n(t)$. The target of dynamic power allocation is to reduce the entire network power cost as much as we can, which includes the backhaul power consumption and RRH convey power consumption. Specifically, assuming that content $l$ is requested by the train, the backhaul power consumption can be represented as
\begin{equation}\label{eqn_problem_1}
{\rm Cost}_b = \int_0^T \sum^2_{n=1} \beta  ||\int_0^T P_n(t)dt||_0 (1-c_{n,l})R_n(t) d t.
\end{equation}
Here, we consider that only the backhaul link associated with the active RRH at which the requested content is not cached consumes extra power. In \eqref{eqn_problem_1}, we use the term $||\int_0^T P_n(t)dt||_0$ to indicate the active RRH, which implies that if $P_n(t)$ is not always zero over time period $(0, T]$, RRH $n$ is an active RRH. $1-c_{n,l}$ is used to indicate the impact of content $l$ caching on the backhaul link associated with RRH $n$. We observe that if $c_{n,l}=1$, which means that content $l$ has been at RRH $n$, then $1-c_{n,l}=0$ indicates that no extra power is needed for backhaul link $n$; otherwise, extra power is required for backhaul link $n$.
Parameter $\beta$ in \eqref{eqn_problem_1} is a constant number establishing a relationship between the rate $R_n(t)$ and the cost of power.

In addition, the entire RRH transmit power consumed over time period $(0, T]$ can be denoted as
\begin{equation}\label{eqn_problem_2}
{\rm Cost}_p = \int_0^T \sum_{n \in \mathcal{N}} P_n(t) dt.
\end{equation}
In this way, the network power cost in total can be written as
\begin{equation}\label{eqn_problem_3}
{\rm Cost} = {\rm Cost}_b+{\rm Cost}_p.
\end{equation}

In addition to minimizing the total network power consumption, our dynamic power allocation also considers several QoS-related constraints. The most important one is the delay requirement. Basically, for an RRH that does not cache the requested content, the delay contains two hops: one for the backhaul link and another for the wireless transmission link between this RRH and the train. However, as we here assume that $R_n(t)\geq C(t)$ always succeeds, the instantaneous delay can be written as
\begin{equation}\label{eqn_problem_4}
\tau (t) = \frac{1}{ C(t)}.
\end{equation}

In brief,the overall dynamic power allocation problem is formulated as
\begin{subequations}\label{eqn_problem_5}
\begin{eqnarray}
&&\min\limits_{P_n(t)} {\rm Cost} \label{eqn_problem_5a} \\
s.t. && \frac{1}{T}\int_0^T P_n(t) dt \leq P_{n,{\rm avg}}~~ \forall n \in \mathcal{N} \label{eqn_problem_5b}\\
&&  \tau (t)  \leq \tau_{\rm max} \label{eqn_problem_5c}\\
&& \int_0^T C(t) dt  \geq Q \label{eqn_problem_5d}\\
&& P_n(t)\geq 0, \label{eqn_problem_5f}
\end{eqnarray}
\end{subequations}
where \eqref{eqn_problem_5b} denotes the average power constraint of every RRH, with $P_{n,{\rm avg}}$ being the maximum average power at RRH $n$; \eqref{eqn_problem_5c} denotes the instantaneous transmission delay requirement, with $\tau_{\rm max}$ being the maximum delay requirement;  \eqref{eqn_problem_5d} is the requested content delivery that requires to be completed through the network during the time period $(0,T]$; and \eqref{eqn_problem_5f} shows that the instantaneous power at each time $t$ should not be negative.
Our final objective is to minimize the total network cost through maximizing the instantaneous power at every RRH and the instantaneous transmission rate over each backhaul link.


\section{Dynamic Power Allocation for High-Speed Railway Fog-RAN}\label{Dynamic}

In this section, our solution to \eqref{eqn_problem_5} is presented. Different from the widely studied conventional static power allocation problem, our considered dynamic is more challenging. The reason includes the following aspects. First, because our optimization problem considered the backhaul consumption, a nonconvex $l_0$-norm function is introduced, which results in the nonconvexity of our problem. Moreover, our optimization variable is a function with respect to time $t$, and the constraints in problem \eqref{eqn_problem_5} include an integration form. Basically, the dynamic optimization problems are widely applied in the fields of smart power systems, robotics, etc. \cite{Simonetto_Proceedings_2020}. In what follows, by using certain approximations, we present several ways to find the optimal solution to the approximated problem.

To address the nonconvex objective function in \eqref{eqn_problem_5}, we utilize the continuous smooth log-function to approximate the $l_0$-norm function as \cite{Tao_TWC_2016}
\begin{equation}\label{eqn_approx}
||x||_0 \approx \frac{\log\left(\frac{x}{\theta}+1\right)}{\log\left(\frac{1}{\theta}+1\right)},
\end{equation}
where the function of $\theta$ is to control the smoothness of the approximation. A smaller value of $\theta$ results in a better approximation, while leads to a worse smooth function, vice versa. With \eqref{eqn_approx}, the nonconvex part ${\rm Cost}_b$ in \eqref{eqn_problem_5} can be written as
\begin{equation}\label{eqn_OptPower_2}
\begin{aligned}
{\rm Cost}_b \approx & c  \beta  \sum^N_{n=1} (1-c_{n,f})\log\left(\frac{\int_0^T P_n(t)dt+\theta}{\theta}\right) \\
& \times  \int_0^T R_n(t) d t ,
\end{aligned}
\end{equation}
where $c=\frac{1}{\log\left(\frac{1}{\theta}+1\right)}$. With \eqref{eqn_OptPower_2}, the objective function in \eqref{eqn_problem_5} can be rewritten as
\begin{equation}\label{eqn_OptPower_3}
\begin{aligned}
{\rm Cost}_{\rm appro 1} \approx &  \int_0^T \sum_{n \in \mathcal{N}} P_n(t) dt + \sum_{n \in \mathcal{N}} b_n \\
& \times \log\left(\frac{\int_0^T P_n(t)dt+\theta}{\theta}\right),
\end{aligned}
\end{equation}
where $b_n= c  \beta   (1-c_{n,f})\int_0^T R_n(t) d t$.


It is observed that the approximated function ${\rm Cost}_{\rm appro 1}$ is still nonconvex with respect to $P_n(t)$ as it involves the summation of concave functions $\log\left(\frac{\int_0^T P_n(t)dt+\theta}{\theta}\right)$. To address this problem, we apply the majorization-minimization (MM) theory to find the upper bound of ${\rm Cost}_{\rm appro 1}$. Considering that the logarithmic function is a concave function, the MM theory applied in our case involves using its first-order Taylor expansion as the upper bound. Then, in the MM algorithm, a solution to \eqref{eqn_problem_5} is generated through minimizing the following upper-bounded function:
\begin{equation}\label{eqn_OptPower_4}
\begin{split}
 & {\rm Cost}_{\rm appro 2} \approx    \int_0^T \sum_{n \in \mathcal{N}} P_n(t) dt +
 \sum_{n \in \mathcal{N}} b_n
 \bigg[  \\
 &  \log\Big( \frac{\theta + \int_0^T P^0_n(t) dt}{ \theta }\Big)+ \frac{ \int_0^T P_n(t) dt-\int_0^T P^0_n(t) dt}{\theta + \int_0^T P^0_n(t) dt}  \bigg]
\end{split},
\end{equation}
where $ \int_0^T P^0_n(t) dt$ denotes a basis point of the Taylor expansion of $\log\left(\frac{\int_0^T P_n(t)dt+\theta}{\theta}\right)$.

Assuming $k_n = 1+ b_n \frac{1}{\theta +  \int_0^T P^0_n(t) dt}$,solving \eqref{eqn_problem_5}  finally reduces to solving the following convex problem:
\begin{subequations}\label{eqn_OptPower_5}
\begin{eqnarray}
&&\min\limits_{P_n(t)} \int_0^T \bigg(\sum_{n \in \mathcal{N}} k_n P_n(t) \bigg) dt  \label{eqn_OptPower_5a} \\
s.t. && \frac{1}{T}\int_0^T P_n(t) dt \leq P_{n,{\rm avg}}~~ \forall n \in \mathcal{N} \label{eqn_OptPower_5b}\\
&&  C(t) \geq \frac{1}{\tau_{\rm max}} \label{eqn_OptPower_5c}\\
&& \int_0^T C(t) dt  \geq Q  \label{eqn_OptPower_5d}\\
&&   P_n(t) \geq 0. \label{eqn_OptPower_5f}
\end{eqnarray}
\end{subequations}

For solving \eqref{eqn_OptPower_5}, firstly, we obtain the following theorem.

\begin{theorem}\label{Lemma_1}
If $\frac{T}{\tau_{\rm max}} \geq Q$, we have $C(t)=\frac{1}{\tau_{\rm max}}$ at the optimal solution, and the optimal solution of \eqref{eqn_OptPower_5} can be represented as
\begin{equation}\label{eqn_OptPower_55}
P_1(t)  = \tilde{a}_0(t) -  \tilde{a}_2(t) P_2(t),
\end{equation}
where $\tilde{a}_n(t) $ is as defined in \eqref{eqn_OptPower_7}. Denote the critical time points $\{t^{\prime}, t^{\prime  \prime}\}$ and $\tilde{t}^\prime$ as defined in \eqref{eqn_OptPower_6new11} and \eqref{eqn_OptPower_6new13}, respectively; the optimal $P_2(t)$ is given in the following four cases:
\begin{itemize}
  \item If $\int_0^T \tilde{a}_3(t) dt \leq T P_{2,{\rm avg}}$ and $B\leq 0$
\begin{equation}\label{eqn_OptPower_6new5_lemma}\nonumber
P^*_2(t) = \left\{
               \begin{array}{cc}
                 0 & 0<t < t^{\prime}\\
                 \tilde{a}_3(t) & t^{\prime} \leq t\leq T \\
               \end{array} \right.
\end{equation}
\item If $\int_0^T \tilde{a}_3(t) dt \leq T P_{2,{\rm avg}}$ and $B> 0$
\begin{equation}\label{eqn_OptPower_6new11_lemma}\nonumber
P^*_2(t) = \left\{
               \begin{array}{cc}
                 0 & 0<t< \min \{t^{\prime}, t^{\prime  \prime}\} \\
                 \tilde{a}_3(t) & \min \{t^{\prime}, t^{\prime  \prime}\}\leq t \leq T \\
               \end{array} \right.
\end{equation}
\item If $\int_0^T \tilde{a}_3(t) dt > T P_{2,{\rm avg}}$ and $B\leq 0$
\begin{equation}\label{eqn_OptPower_6new14_lemma}
P^*_2(t) = \left\{
               \begin{array}{cc}
                 0 & 0<t< \max \{t^{\prime}, \tilde{t}^\prime\} \\
                 \tilde{a}_3(t) &  \max \{t^{\prime}, \tilde{t}^\prime\}\leq t \leq T \\
               \end{array} \right.
\end{equation}
\item If $\int_0^T \tilde{a}_3(t) dt > T P_{2,{\rm avg}}$ and $B\leq 0$
\begin{itemize}
  \item If $t^{\prime} \geq \tilde{t}^\prime$, the optimal solution is given by \eqref{eqn_OptPower_6new14_lemma}.
  \item Otherwise, the optimal solution can be obtained by solving linear program problem \eqref{eqn_OptPower_6new15}.
\end{itemize}
\end{itemize}

\end{theorem}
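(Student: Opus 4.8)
The plan is to solve \eqref{eqn_OptPower_5} in three moves: show the delay constraint is active and use it to eliminate $P_1(t)$; reduce the problem to a one-dimensional linear program in $P_2(t)$; and solve that program using the geometry of \eqref{eqn_channel_model_0}. For the first move, note that $k_n\ge1>0$, so the objective \eqref{eqn_OptPower_5a} is strictly increasing in each $P_n(t)$, and by \eqref{eqn_channel_model_2} so is $C(t)$. Let $\gamma$ denote the constant SINR level pinned down by $C(t)=1/\tau_{\rm max}$ in \eqref{eqn_channel_model_2}. Given any feasible $\{P_n(t)\}$, constraint \eqref{eqn_OptPower_5c} gives $\sum_n P_n(t)|h_n(t)|^2\ge\gamma\sigma^2>0$, so $\rho(t):=\gamma\sigma^2/\sum_n P_n(t)|h_n(t)|^2\le1$; replacing $P_n(t)$ by $\rho(t)P_n(t)$ preserves \eqref{eqn_OptPower_5b} and \eqref{eqn_OptPower_5f}, does not increase the objective, makes $C(t)\equiv1/\tau_{\rm max}$, and---because $T/\tau_{\rm max}\ge Q$---keeps \eqref{eqn_OptPower_5d} since then $\int_0^T C(t)\,dt=T/\tau_{\rm max}\ge Q$. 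Hence some optimal solution has $C(t)=1/\tau_{\rm max}$ a.e., i.e. $\sum_n P_n(t)|h_n(t)|^2=\gamma\sigma^2$; substituting $h_n(t)=\sqrt{G}/d_n^{\alpha}(t)$ with $d_n(t)$ from \eqref{eqn_channel_model_0} and solving this affine relation for $P_1(t)$ yields \eqref{eqn_OptPower_55} with the coefficients $\tilde{a}_0(t),\tilde{a}_2(t)$ of \eqref{eqn_OptPower_7}, while $P_1(t)\ge0$ turns into the pointwise bound $0\le P_2(t)\le\tilde{a}_3(t)$ of \eqref{eqn_OptPower_7}.

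Inserting \eqref{eqn_OptPower_55} into the objective leaves, up to the constant $k_1\int_0^T\tilde{a}_0(t)\,dt$, the functional $\int_0^T g(t)P_2(t)\,dt$ with $g(t):=k_2-k_1\tilde{a}_2(t)$; constraint \eqref{eqn_OptPower_5b} for $n=2$ becomes $\int_0^T P_2(t)\,dt\le T P_{2,{\rm avg}}$, and for $n=1$ it rearranges to $\int_0^T\tilde{a}_2(t)P_2(t)\,dt\ge\int_0^T\tilde{a}_0(t)\,dt-T P_{1,{\rm avg}}$. I would identify the constant $B$ of the theorem with this last right-hand side, so that $B\le0$ makes RRH~1's average-power constraint vacuous and $B>0$ makes it binding. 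What remains is a linear program over $P_2(\cdot)$ with one box constraint and two scalar integral constraints.

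The structural input is the channel geometry: on the serving interval $\tilde{a}_2(t)=(d_1(t)/d_2(t))^{2\alpha}$ is strictly increasing and $\tilde{a}_3(t)$ strictly decreasing, so $g(t)$ changes sign at most once, from $+$ to $-$, and $g(t)/\tilde{a}_2(t)=k_2/\tilde{a}_2(t)-k_1$ is strictly decreasing; the relevant critical times are $t',t''$ of \eqref{eqn_OptPower_6new11} (the zero of $g$, and the point at which a full-power terminal interval first meets $\int\tilde{a}_2 P_2=B$) and $\tilde{t}'$ of \eqref{eqn_OptPower_6new13} (at which $\int_{\tilde{t}'}^T\tilde{a}_3(t)\,dt=T P_{2,{\rm avg}}$). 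Attaching multipliers $\mu\ge0$ to $\int\tilde{a}_2 P_2\ge B$ and $\lambda\ge0$ to $\int P_2\le T P_{2,{\rm avg}}$, stationarity forces a bang-bang optimum, $P_2^*(t)=\tilde{a}_3(t)$ where $g(t)+\lambda-\mu\tilde{a}_2(t)<0$ and $P_2^*(t)=0$ otherwise; since $g(t)-\mu\tilde{a}_2(t)=\tilde{a}_2(t)\bigl(k_2/\tilde{a}_2(t)-k_1-\mu\bigr)$ is monotone in $t$, the ``on'' set is a terminal interval $[t^\star,T]$, and complementary slackness fixes $t^\star$. When neither integral constraint is active ($\int_0^T\tilde{a}_3(t)\,dt\le T P_{2,{\rm avg}}$ and $B\le0$) one gets $\mu=\lambda=0$ and $t^\star=t'$; when only RRH~1's constraint is active ($B>0$) one gets $\lambda=0<\mu$ and $t^\star=\min\{t',t''\}$; when only the power cap is active one gets $\mu=0<\lambda$ and $t^\star=\max\{t',\tilde{t}'\}$; and when both may be active, comparing $t'$ with $\tilde{t}'$ either reduces to the power-cap formula \eqref{eqn_OptPower_6new14_lemma} or leaves the finite-dimensional linear program \eqref{eqn_OptPower_6new15}. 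These give the four displayed cases.

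The main obstacle is precisely this last case: using the monotonicities above and the feasibility hypothesis, one must show that $t'\ge\tilde{t}'$ already forces RRH~1's constraint to hold on $[t',T]$, so that \eqref{eqn_OptPower_6new14_lemma} still applies, and that otherwise the two active integral constraints genuinely conflict, so that stationarity pins down only the break points and one must pass to \eqref{eqn_OptPower_6new15}. A subsidiary point is an exchange argument showing that, even when RRH~1's constraint forces $P_2>0$ on part of $\{g>0\}$, the ``on'' set remains a single interval: since $g/\tilde{a}_2$ is decreasing, the interval adjacent to the zero of $g$ is the cheapest place, per unit of $\int\tilde{a}_2 P_2$, to add the extra power.
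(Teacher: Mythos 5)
Your proposal is correct and follows the paper's own proof essentially step for step: showing the delay constraint \eqref{eqn_OptPower_5c} is active by a power-scaling argument (valid because $T/\tau_{\rm max}\geq Q$ makes \eqref{eqn_OptPower_5d} redundant), eliminating $P_1(t)$ through the resulting equality, reducing to the same one-dimensional linear program \eqref{eqn_OptPower_6new3} with the same constant $B$, and exploiting the same monotonicity of $k_2-k_1\tilde{a}_2(t)$, $\tilde{a}_2(t)$ and $\tilde{a}_3(t)$ to obtain the identical four-case threshold solution. The only difference is presentational: you justify the bang-bang, single-switching-time structure via Lagrange multipliers and complementary slackness, whereas the paper argues directly by sign/exchange contradictions; the critical times $t'$, $t''$, $\tilde{t}'$ and the fallback to the discretized program \eqref{eqn_OptPower_6new15} in the mixed fourth case coincide with the paper's treatment (which is no more detailed there than your sketch).
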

\begin{proof}
It is noted that if $\frac{T}{\tau_{\rm max}} \geq Q$ succeeds, for condition \eqref{eqn_OptPower_5d}, we have $\int_0^T C(t) dt \geq \int_0^T \frac{1}{\tau_{\rm max}}dt = \frac{T}{\tau_{\rm max}} \geq Q$. This implies that condition \eqref{eqn_OptPower_5d} in \eqref{eqn_OptPower_5} is redundant. Next, we show that at the optimal solution, \eqref{eqn_OptPower_5c} must be active. We prove this result using a contradiction. Assume that at the optimal solution of \eqref{eqn_OptPower_5}, the optimal $P_n(t)$ makes $C(t)>\frac{1}{\tau_{\rm max}}$ at some time $t$. Now, we can multiply $P_n(t)$ by a coefficient $\delta_n(t)\in (0, 1)$, which can further reduce the value of the objective function while not violating the constraints. According to the above analysis, under the condition of $\frac{T}{\tau_{\rm max}} \geq Q$, problem \eqref{eqn_OptPower_5} is equivalent to
\begin{subequations}\label{eqn_OptPower_6}
\begin{eqnarray}
&&\min\limits_{P_n(t)} \int_0^T \bigg(\sum_{n \in \mathcal{N}} k_n P_n(t) \bigg) dt   \label{eqn_OptPower_6a} \\
s.t. && \int_0^T P_n(t) dt \leq  T P_{n,{\rm avg}}~~ \forall n \in \mathcal{N} \label{eqn_OptPower_6b} \\
&& B \log_2\Big(1+ \sum_{n \in \mathcal{N}} \frac{G P_n(t)} {d_n(t)^\alpha \sigma^2} \Big) = \frac{1}{\tau_{\rm max}} \label{eqn_OptPower_6c}\\
&&    P_n(t) \geq 0, \forall n. \label{eqn_OptPower_6d}
\end{eqnarray}
\end{subequations}
With the equality constraint \eqref{eqn_OptPower_6c}, by denoting $a_n(t) = \frac{G} {d_n(t)^\alpha \sigma^2}$, we have
\begin{equation}\label{eqn_OptPower_7}
P_1(t) = \tilde{a}_0(t) -  \tilde{a}_2 (t) P_2(t),
\end{equation}
where $\tilde{a}_0(t) =\frac{2^{\frac{1}{B \tau_{\rm max} }}-1}{a_1(t)} $ and $\tilde{a}_2 (t) = \frac{a_2(t)}{a_1(t)} $. Problem \eqref{eqn_OptPower_6} can be further transformed into
\begin{subequations}\label{eqn_OptPower_6new1}
\begin{eqnarray}
&&\min\limits_{P_2(t)} \int_0^T \bigg(k_2 - k_1 \tilde{a}_2 (t)  \bigg) P_2(t) dt   \label{eqn_OptPower_6new1a} \\
s.t. && \int_0^T P_2(t) dt \leq  T P_{2,{\rm avg}} \label{eqn_OptPower_6new1b} \\
&& B \leq \int_0^T \tilde{a}_2 (t) P_2(t) dt \leq A  \label{eqn_OptPower_6new1c}\\
&&    0 \leq P_2(t) \leq \tilde{a}_3(t), \label{eqn_OptPower_6new1d}
\end{eqnarray}
\end{subequations}
where $A = \int_0^T \tilde{a}_0 (t)dt$, $B = \int_0^T \tilde{a}_0 (t)dt - T P_{1,{\rm avg}}$ and $\tilde{a}_3(t) = \frac{\tilde{a}_0 (t)}{\tilde{a}_2 (t)}$. Constraint \eqref{eqn_OptPower_6new1c} comes from the fact that $0\leq \int_0^T P_1(t) dt \leq  T P_{1,{\rm avg}}$, while \eqref{eqn_OptPower_6new1d} comes from the fact that $P_1(t) \geq 0$.

According to the definitions of $\tilde{a}_3(t)$ and $\tilde{a}_2 (t)$, we observe that if constraint \eqref{eqn_OptPower_6new1d} is satisfied, we have
\begin{equation}\label{eqn_OptPower_6new2}
\int_0^T \tilde{a}_2 (t) P_2(t) dt \leq \int_0^T \tilde{a}_2 (t) \tilde{a}_3(t) dt = A.
\end{equation}
Then, problem \eqref{eqn_OptPower_6new1} can be equivalent to
\begin{subequations}\label{eqn_OptPower_6new3}
\begin{eqnarray}
&&\min\limits_{P_2(t)} \int_0^T \bigg(k_2 - k_1 \tilde{a}_2 (t)  \bigg) P_2(t) dt   \label{eqn_OptPower_6new3a} \\
s.t. && \int_0^T P_2(t) dt \leq  T P_{2,{\rm avg}} \label{eqn_OptPower_6new3b} \\
&& \int_0^T \tilde{a}_2 (t) P_2(t) dt  \geq B  \label{eqn_OptPower_6new3c}\\
&&    0 \leq P_2(t) \leq \tilde{a}_3(t). \label{eqn_OptPower_6new3d}
\end{eqnarray}
\end{subequations}

To find the analytical solution to \eqref{eqn_OptPower_6new3}, we next discuss certain particular cases that help simplify the problem.

\emph{\textbf{Case 1)}} when $\int_0^T \tilde{a}_3(t) dt \leq T P_{2,{\rm avg}}$ and $B\leq 0$ are satisfied.

In this case, constraints \eqref{eqn_OptPower_6new3b} and \eqref{eqn_OptPower_6new3c} are redundant. Problem \eqref{eqn_OptPower_6new3} reduces to
\begin{subequations}\label{eqn_OptPower_6new4}
\begin{eqnarray}
\min\limits_{P_n(t)} && \int_0^T \bigg(k_2 - k_1 \tilde{a}_2 (t)  \bigg) P_2(t) dt   \label{eqn_OptPower_6new4a} \\
s.t. &&    0 \leq P_2(t) \leq \tilde{a}_3(t). \label{eqn_OptPower_6new4d}
\end{eqnarray}
\end{subequations}
It is noted that the optimal solution to \eqref{eqn_OptPower_6new4} depends on the sign of $k_2 - k_1 \tilde{a}_2 (t)$. To minimize the objective function, the optimal solution to \eqref{eqn_OptPower_6new4} is given as
\begin{equation}\label{eqn_OptPower_6new5}
P^*_2(t) = \left\{
               \begin{array}{cc}
                 0 & t\in \{t|k_2 - k_1 \tilde{a}_2 (t)>0\} \\
                 \tilde{a}_3(t) & t\in \{t|k_2 - k_1 \tilde{a}_2 (t)\leq 0\} \\
               \end{array} \right..
\end{equation}

\emph{\textbf{Case 2)}} when $\int_0^T \tilde{a}_3(t) dt \leq T P_{2,{\rm avg}}$ and $B> 0$ are satisfied.

In this case, constraint \eqref{eqn_OptPower_6new3b} is redundant. Problem \eqref{eqn_OptPower_6new3} reduces to
\begin{subequations}\label{eqn_OptPower_6new6}
\begin{eqnarray}
&&\min\limits_{P_2(t)} \int_0^T \bigg(k_2 - k_1 \tilde{a}_2 (t)  \bigg) P_2(t) dt   \label{eqn_OptPower_6new6a} \\
s.t. && \int_0^T \tilde{a}_2 (t) P_2(t) dt  \geq B  \label{eqn_OptPower_6new6c}\\
&&    0 \leq P_2(t) \leq \tilde{a}_3(t). \label{eqn_OptPower_6new6d}
\end{eqnarray}
\end{subequations}

Within the time period $(0, T]$, as the train departs from RRH $1$ and move closer to RRH $2$, we see that functions $k_2 - k_1 \tilde{a}_2 (t)$, $\tilde{a}_2 (t)$, and $\tilde{a}_3(t)$ are decreasing, increasing and decreasing functions with respect to $t$, respectively.
According to this observation, the two critical time points are defined as
\begin{equation}\label{eqn_OptPower_6new7}
\begin{split}
t^\prime & \triangleq \{t| k_2 - k_1 \tilde{a}_2 (t) =0 \} \\
t^{\prime  \prime} & \triangleq \left\{t| \int_0^T \tilde{a}_2 (t) \tilde{a}_3(t) dt =\int_0^T \tilde{a}_0 (t) dt=  B \right \}.
\end{split}
\end{equation}
It is noted that for the time region $(t^\prime, T]$, $P_2(t)$ should be equal to $\tilde{a}_3(t) $ to minimize the value of the objective. Therefore, in the scenario with $t^\prime \geq t^{\prime  \prime}$, the optimal solution is presented as
\begin{equation}\label{eqn_OptPower_6new8}
P^*_2(t) = \left\{
               \begin{array}{cc}
                 0 & t< t^{\prime  \prime} \\
                 \tilde{a}_3(t) & t \geq t^{\prime  \prime} \\
               \end{array} \right. .
\end{equation}
For the scenario with $t^\prime < t^{\prime  \prime}$, it is easy to see that for the time region over $t\in ( t^{\prime  \prime}, T]$, the optimal $P_2(t)$ should be equal to $\tilde{a}_3(t) $ as given in \eqref{eqn_OptPower_6new8}. However, this kind of solution cannot satisfy the constraint \eqref{eqn_OptPower_6new6c}. To this end, we need to find a certain $P_2(t)$ in the time period $t\in (0, t^{\prime  \prime}]$ to satisfy
\begin{equation}\label{eqn_OptPower_6new9}
\int_0^{t^{\prime  \prime}} \tilde{a}_2 (t) P_2(t) dt  =  B - \int_{t^{\prime  \prime}}^T \tilde{a}_0 (t) dt.
\end{equation}
Next we show that the optimal solution over $t\in (0, t^{\prime  \prime}]$ is
\begin{equation}\label{eqn_OptPower_6new10}
P^*_2(t) = \left\{
               \begin{array}{cc}
                 0 & t< t^{\prime} \\
                 \tilde{a}_3(t) & t \in (t^{\prime}, t^{\prime  \prime}] \\
               \end{array} \right. .
\end{equation}
We prove this result with contradiction analysis. Assume that there is a new optimal solution $P^{**}_2(t) $ over $t\in (0, t^{\prime  \prime}]$ given by
\begin{equation}\label{eqn_OptPower_6new11}
P^{**}_2(t) = \left\{
               \begin{array}{cc}
                 0 & t< \tilde{t} \\
                 <\tilde{a}_3(t) & t \in (\tilde{t}, t^{\prime  \prime}] \\
               \end{array} \right. .
\end{equation}
To satisfy \eqref{eqn_OptPower_6new9}, we have $\tilde{t}<t^{\prime}$. However, $k_2 - k_1 \tilde{a}_2 (t)$ is a decreasing function over $t$. Hence, $\int_0^{t^{\prime  \prime}} \bigg(k_2 - k_1 \tilde{a}_2 (t)  \bigg) P^{**}_2(t)dt$ must be larger than $\int_0^{t^{\prime  \prime}} \bigg(k_2 - k_1 \tilde{a}_2 (t)  \bigg) P^{*}_2(t)dt$ with $P^*_2(t)$ given in \eqref{eqn_OptPower_6new10}, which implies that $P^{**}_2(t) $ cannot be an optimal solution. According to the above analysis, the optimal solution for \emph{Case 2} is denoted as
\begin{equation}\label{eqn_OptPower_6new11}
P^*_2(t) = \left\{
               \begin{array}{cc}
                 0 & t< \min \{t^{\prime}, t^{\prime  \prime}\} \\
                 \tilde{a}_3(t) & t\geq \min \{t^{\prime}, t^{\prime  \prime}\} \\
               \end{array} \right. .
\end{equation}

\emph{\textbf{Case 3)}} when $\int_0^T \tilde{a}_3(t) dt > T P_{2,{\rm avg}}$ and $B\leq 0$ are satisfied.

In this case, the constraint \eqref{eqn_OptPower_6new3c} is redundant. Problem \eqref{eqn_OptPower_6new3} reduces to
\begin{subequations}\label{eqn_OptPower_6new12}
\begin{eqnarray}
&&\min\limits_{P_2(t)} \int_0^T \bigg(k_2 - k_1 \tilde{a}_2 (t)  \bigg) P_2(t) dt   \label{eqn_OptPower_6new12a} \\
s.t. && \int_0^T P_2(t) dt \leq  T P_{2,{\rm avg}} \label{eqn_OptPower_6new12b} \\
&&    0 \leq P_2(t) \leq \tilde{a}_3(t). \label{eqn_OptPower_6new12d}
\end{eqnarray}
\end{subequations}
We define $\tilde{t}^\prime$ as
\begin{equation}\label{eqn_OptPower_6new13}
\tilde{t}^\prime \triangleq \left\{t| \int_{\tilde{t}^\prime}^T \tilde{a}_3(t)  dt =  T P_{2,{\rm avg}} \right \}.
\end{equation}
Again, because $k_2 - k_1 \tilde{a}_2 (t)$ is a decreasing function, similar to the analysis in \emph{Case 2}, the optimal solution to problem \eqref{eqn_OptPower_6new12} is presented as
\begin{equation}\label{eqn_OptPower_6new14}
P^*_2(t) = \left\{
               \begin{array}{cc}
                 0 & t< \max \{t^{\prime}, \tilde{t}^\prime\} \\
                 \tilde{a}_3(t) & t\geq \max \{t^{\prime}, \tilde{t}^\prime\} \\
               \end{array} \right. .
\end{equation}
It is noted that in \eqref{eqn_OptPower_6new14}, the critical time point is $\max \{t^{\prime}, \tilde{t}^\prime\}$, which is different from \emph{Case 2} due to the inequality constraint \eqref{eqn_OptPower_6new12b}.

\emph{\textbf{Case 4)}} when $\int_0^T \tilde{a}_3(t) dt > T P_{2,{\rm avg}}$ and $B> 0$ are satisfied.

In this case, according to Lemma \ref{corollary_appendix}, the feasibility of problem \eqref{eqn_OptPower_6new3} requires $\tilde{t}^\prime \leq t^{\prime  \prime}$. Under this condition, if $t^{\prime} \geq \tilde{t}^\prime$, we see that the optimal solution is equal to \eqref{eqn_OptPower_6new11} and that constraint \eqref{eqn_OptPower_6new3b} is redundant. Otherwise, the optimal solution can be approximately obtained through solving the following linear programming:
\begin{subequations}\label{eqn_OptPower_6new15}
\begin{eqnarray}
&&\min\limits_{P_2(t_m)} \sum^M_{m=1} \bigg(k_2 - k_1 \tilde{a}_2 (t_m)  \bigg) P_2(t_m) {\bigtriangleup t}   \label{eqn_OptPower_6new3a} \\
s.t. && \sum^M_{m=1}  P_2(t_m) {\bigtriangleup t} \leq  T P_{2,{\rm avg}} \label{eqn_OptPower_6new3b} \\
&& \sum^M_{m=1}  \tilde{a}_2 (t_m) P_2(t_m) {\bigtriangleup t}  \geq B  \label{eqn_OptPower_6new3c}\\
&&    0 \leq P_2(t_m) \leq \tilde{a}_3(t_m), \forall m. \label{eqn_OptPower_6new3d}
\end{eqnarray}
\end{subequations}
The linear problem is obtained by sampling the time period $t\in(0, T]$ as the discrete time points $\{t_1, t_2,\cdots,t_M\}$ with the adjacent sampling point interval given by ${\bigtriangleup t}$. It is noted that ${\bigtriangleup t}$ is sufficiently small; thus, we can obtain an approximately optimal solution via \eqref{eqn_OptPower_6new15}.

Now, by combining the solutions of \emph{Case 1}-\emph{Case 4}, we obtain the final optimal solution.
This completes the proof of Theorem \ref{Lemma_1}.

\end{proof}

Consider a special case where the train is assisted by only RRH $1$ over the time period $(0, T]$, i.e., $\mathcal{N}=\{1\}$. Theorem \ref{Lemma_1} can be reduced to the following lemma.
\begin{lemma}\label{corollary_1}
If $\frac{T}{\tau_{\rm max}} \geq Q$ and $\mathcal{N}=\{1\}$, we have $C(t)=\frac{1}{\tau_{\rm max}}$ at the optimal solution, and the optimal solution to \eqref{eqn_OptPower_5} can be denoted as
\begin{equation}\label{eqn_OptPower_15}
P_1(t) = \left( 2^{\frac{1}{B\tau_{\rm max}}} -1 \right)\frac{d^{\alpha}_1(t) \sigma^2}{G}.
\end{equation}
\end{lemma}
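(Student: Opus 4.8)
The plan is to treat Lemma~\ref{corollary_1} as the single-RRH specialization of Theorem~\ref{Lemma_1}, in which the dynamic power problem \eqref{eqn_OptPower_5} collapses to a pointwise algebraic identity, leaving essentially nothing to optimize. First I would specialize the achievable rate \eqref{eqn_channel_model_2}: with $\mathcal{N}=\{1\}$ it reads $C(t)=B\log_2\!\big(1+\tfrac{G P_1(t)}{d_1(t)^\alpha\sigma^2}\big)$, so the only remaining variable is $P_1(t)$, and (after the MM step) the objective \eqref{eqn_OptPower_5a} is $\int_0^T k_1 P_1(t)\,dt$ with $k_1=1+b_1/(\theta+\int_0^T P_1^0(t)\,dt)>0$ a positive constant.

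Next, exactly as in the proof of Theorem~\ref{Lemma_1}, I would use $\tfrac{T}{\tau_{\rm max}}\ge Q$ to discard the content-delivery constraint \eqref{eqn_OptPower_5d}: any $P_1(t)$ satisfying \eqref{eqn_OptPower_5c} automatically gives $\int_0^T C(t)\,dt\ge \tfrac{T}{\tau_{\rm max}}\ge Q$. What remains is to minimize $\int_0^T k_1 P_1(t)\,dt$ over $P_1(t)\ge 0$ subject to the instantaneous rate bound $C(t)\ge\tfrac{1}{\tau_{\rm max}}$ and the average-power bound \eqref{eqn_OptPower_5b}.

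The heart of the argument is that the rate constraint must be tight for almost every $t$. Since $C(t)$ is strictly increasing in $P_1(t)$, the bound $C(t)\ge\tfrac{1}{\tau_{\rm max}}$ is equivalent to the pointwise lower bound $P_1(t)\ge\big(2^{1/(B\tau_{\rm max})}-1\big)\tfrac{d_1(t)^\alpha\sigma^2}{G}$. Because $k_1>0$, the objective integrand is increasing in $P_1(t)$ at every instant, so choosing $P_1(t)$ to meet this lower bound with equality simultaneously minimizes the integrand and makes \eqref{eqn_OptPower_5b} as slack as possible; equivalently, a one-line contradiction argument (the same scaling trick used for Theorem~\ref{Lemma_1}: if $C(t)>\tfrac{1}{\tau_{\rm max}}$ on a positive-measure set, multiply $P_1$ there by a factor in $(0,1)$ to strictly lower the objective without violating any constraint) shows no other feasible allocation can be optimal. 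Solving $B\log_2\!\big(1+\tfrac{G P_1(t)}{d_1(t)^\alpha\sigma^2}\big)=\tfrac{1}{\tau_{\rm max}}$ for $P_1(t)$ then yields \eqref{eqn_OptPower_15} directly, and substituting $d_1(t)=\sqrt{(vt-l_1)^2+d_0^2+h^2}$ makes it explicit in $t$.

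I do not anticipate a genuine obstacle: the only subtlety is that the allocation \eqref{eqn_OptPower_15} is forced, so the statement implicitly presumes feasibility, i.e. $\tfrac{1}{T}\int_0^T P_1(t)\,dt\le P_{1,{\rm avg}}$; I would note this explicitly rather than prove it, since with a single RRH and an active delay constraint there is no trade-off left to exploit. The remaining care is purely bookkeeping --- being explicit that ``almost every $t$'' suffices because the objective and all constraints are integral functionals of $P_1$.
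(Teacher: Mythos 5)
Your argument is correct and is essentially the paper's own route: under $\tfrac{T}{\tau_{\rm max}}\ge Q$ the delivery constraint is redundant, the delay constraint becomes the equality $C(t)=\tfrac{1}{\tau_{\rm max}}$ at optimality (the same scaling argument as in Theorem~\ref{Lemma_1}), and with a single RRH that equality pins down $P_1(t)$ pointwise, giving \eqref{eqn_OptPower_15}, with feasibility of the average-power constraint presumed just as in the paper's ``if the problem is feasible'' caveat. Your explicit monotonicity and pointwise lower-bound phrasing merely spells out what the paper leaves implicit; no substantive difference.
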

\begin{proof}
When $\mathcal{N}=\{1\}$, problem \eqref{eqn_OptPower_5} can be written as
\begin{subequations}\label{eqn_OptPower_15}
\begin{eqnarray}
&&\min\limits_{P_1(t)} \int_0^T \bigg( k_1 P_1(t) \bigg) dt  \label{eqn_OptPower_15a} \\
s.t. && \frac{1}{T}\int_0^T P_1(t) dt \leq P_{1,{\rm avg}}~~ \label{eqn_OptPower_15b}\\
&&  C(t) = \frac{1}{\tau_{\rm max}} \label{eqn_OptPower_15c}\\
&&   P_1(t) \geq 0. \label{eqn_OptPower_15f}
\end{eqnarray}
\end{subequations}
If the problem is feasible, the solution is determined by constraint \eqref{eqn_OptPower_15c}, which completes the proof of Lemma \ref{corollary_1}.
\end{proof}

If condition $\frac{T}{\tau_{\rm max}} \geq Q$ is not satisfied, the conclusions shown in Theorem \ref{Lemma_1} and Lemma \ref{corollary_1} will not be applicable. To propose an efficient way to address this problem, we first give the following lemma.
\begin{lemma}\label{corollary_2}
If $\frac{T}{\tau_{\rm max}} < Q$, the optimal solution to \eqref{eqn_OptPower_5} can be represented as
\begin{equation}\label{eqn_OptPower_neweq1}
P_n(t) = P_{n,1}(t) + P_{n,2}(t),
\end{equation}
where $P_{n,1}(t)$ is the optimal solution obtained by solving \eqref{eqn_OptPower_6} and $P_{n,2}(t)$ is the solution to
\begin{subequations}\label{eqn_OptPower_neqeq2}
\begin{eqnarray}
&&\min\limits_{P_{n,2}(t)} \int_0^T \bigg(\sum_{n \in \mathcal{N}} k_n P_{n,2}(t) \bigg) dt  \label{eqn_OptPower_neqeq2a} \\
s.t. && \frac{1}{T}\int_0^T P_{n,2}(t) dt \leq b_n ~~ \forall n \in \mathcal{N} \label{eqn_OptPower_neqeq2b}\\
&&  \sum_{n\in \mathcal{N}}  \kappa_n(t) P_{n,2}(t)  \geq 0 \label{eqn_OptPower_neqeq2c}\\
&& \int_0^T B \log_2 \left(c_n(t) + \sum_{n \in \mathcal{N}} \kappa_n(t) P_{n,2}(t) \right) dt  \geq Q  ~~~~~~~\label{eqn_OptPower_neqeq2d}\\
&&   P_{n,2}(t) \geq - P^*_{n,1}(t), \label{eqn_OptPower_neqeq2f}
\end{eqnarray}
\end{subequations}
where $b_n =P_{n,{\rm avg}}- \frac{1}{T}\int_0^T P^*_{n,1}(t) dt $, $\frac{G}{d^\alpha_n(t)\sigma^2} = \kappa_n(t)$ and $c_n(t) = 1+\sum^N_{n=1} \kappa_n(t)P^*_{n,1}(t)$, with $P^*_{n,1}(t)$ being the optimal solution to $P_{n,1}(t)$.
\end{lemma}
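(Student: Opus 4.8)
The plan is to establish the lemma through a single affine change of variables that decomposes the transmit power into a \emph{delay-feasibility baseline} and a \emph{rate-completion increment}, and then to verify that this substitution carries problem \eqref{eqn_OptPower_5} exactly onto problem \eqref{eqn_OptPower_neqeq2}. Two preliminary observations are needed. First, problem \eqref{eqn_OptPower_6} is feasible whenever \eqref{eqn_OptPower_5} is: given any feasible $P_n(t)$ of \eqref{eqn_OptPower_5}, constraint \eqref{eqn_OptPower_5c} forces $\sum_n \kappa_n(t) P_n(t) \ge 2^{1/(B\tau_{\rm max})}-1>0$, so scaling $P_n(t)$ pointwise by $\lambda(t) = \frac{2^{1/(B\tau_{\rm max})}-1}{\sum_m \kappa_m(t) P_m(t)}\in(0,1]$ preserves \eqref{eqn_OptPower_5b} and \eqref{eqn_OptPower_5f} while turning \eqref{eqn_OptPower_5c} into the equality \eqref{eqn_OptPower_6c}. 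Hence the optimizer $P^*_{n,1}(t)$ of \eqref{eqn_OptPower_6} exists, and its closed form is exactly the one derived for \eqref{eqn_OptPower_6} inside the proof of Theorem \ref{Lemma_1} (the case analysis there does not rely on the hypothesis $\frac{T}{\tau_{\rm max}}\ge Q$). Second, since $P^*_{n,1}(t)$ meets \eqref{eqn_OptPower_6c} with equality, $1+\sum_n \kappa_n(t) P^*_{n,1}(t) = 2^{1/(B\tau_{\rm max})}$ for every $t$; in particular the quantity $c_n(t)$ defined in the lemma is the \emph{constant} $2^{1/(B\tau_{\rm max})}$.

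With $P^*_{n,1}(t)$ fixed, I would substitute $P_{n,2}(t) := P_n(t) - P^*_{n,1}(t)$ into \eqref{eqn_OptPower_5} and match the terms one by one. The objective \eqref{eqn_OptPower_5a} becomes $\int_0^T \sum_n k_n P^*_{n,1}(t)\,dt + \int_0^T \sum_n k_n P_{n,2}(t)\,dt$, whose first summand is a fixed constant, so minimizing it is equivalent to \eqref{eqn_OptPower_neqeq2a}. Constraint \eqref{eqn_OptPower_5b} becomes \eqref{eqn_OptPower_neqeq2b} with $b_n = P_{n,{\rm avg}} - \frac1T\int_0^T P^*_{n,1}(t)\,dt$. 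Writing $C(t) = B\log_2\!\big(c_n(t) + \sum_n \kappa_n(t) P_{n,2}(t)\big)$ and using $c_n(t)\equiv 2^{1/(B\tau_{\rm max})}$, the instantaneous-delay constraint \eqref{eqn_OptPower_5c} collapses to the linear inequality $\sum_n \kappa_n(t) P_{n,2}(t)\ge 0$, i.e.\ \eqref{eqn_OptPower_neqeq2c}; the content-delivery constraint \eqref{eqn_OptPower_5d} is rewritten verbatim as \eqref{eqn_OptPower_neqeq2d}; and nonnegativity \eqref{eqn_OptPower_5f} becomes $P_{n,2}(t)\ge -P^*_{n,1}(t)$, i.e.\ \eqref{eqn_OptPower_neqeq2f}. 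Hence $P_{n,2}\mapsto P^*_{n,1}+P_{n,2}$ is an affine bijection from the feasible set of \eqref{eqn_OptPower_neqeq2} onto that of \eqref{eqn_OptPower_5} under which the two objectives differ only by the constant $\int_0^T \sum_n k_n P^*_{n,1}(t)\,dt$. Consequently $P^*_{n,2}(t)$ is optimal for \eqref{eqn_OptPower_neqeq2} if and only if $P_n(t) = P^*_{n,1}(t) + P^*_{n,2}(t)$ is optimal for \eqref{eqn_OptPower_5}, which is precisely \eqref{eqn_OptPower_neweq1}; feasibility of \eqref{eqn_OptPower_neqeq2} (so that $P^*_{n,2}$ exists) follows from feasibility of \eqref{eqn_OptPower_5} through the same bijection.

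The term-by-term matching is routine; the step deserving care is that the decomposition is well posed \emph{only because} the baseline $P^*_{n,1}(t)$ is taken to make the delay constraint tight. That tightness is exactly what forces $c_n(t)$ to be a constant, which is in turn what turns the reformulated delay constraint \eqref{eqn_OptPower_neqeq2c} into a linear one and keeps the residual problem in the stated form; had the baseline been any profile with $C(t)>1/\tau_{\rm max}$ at some $t$, $c_n(t)$ would be $t$-dependent and \eqref{eqn_OptPower_neqeq2} would no longer be the correct increment problem. I would therefore present, as the structural core of the argument, the appeal to the equality \eqref{eqn_OptPower_6c} — itself justified by the contradiction/scaling argument in the proof of Theorem \ref{Lemma_1} — and treat the remainder as bookkeeping.
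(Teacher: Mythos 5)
Your proof is correct and follows essentially the same route as the paper: fix $P^*_{n,1}(t)$ as the optimizer of \eqref{eqn_OptPower_6} (delay constraint tight), substitute $P_n(t)=P^*_{n,1}(t)+P_{n,2}(t)$ into \eqref{eqn_OptPower_5}, and observe that the shifted constraints and objective are exactly those of \eqref{eqn_OptPower_neqeq2}. You in fact make explicit two points the paper leaves implicit --- feasibility of \eqref{eqn_OptPower_6} via pointwise scaling and the constancy $c_n(t)=2^{1/(B\tau_{\rm max})}$ that linearizes the residual delay constraint --- but this is a refinement of the same argument, not a different one.
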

\begin{proof}
It is noted that for any feasible $P_n(t)$ of \eqref{eqn_OptPower_5}, we can always decompose it into a sum of $P_{n,1}(t)$ and $P_{n,2}(t)$. Basically, the choice of $P_{n,1}(t)$ and $P_{n,2}(t)$ can be arbitrary as long as their sum is equal to $P_n(t)$. Without reduction of generality, we assume that the term $P_{n,1}(t)$ is chosen as the optimal solution to \eqref{eqn_OptPower_6}, denoted by  $P^*_{n,1}(t)$. Then, problem \eqref{eqn_OptPower_5} changes to
\begin{subequations}\label{eqn_OptPower_neweq3}
\begin{eqnarray}
&&\min\limits_{P_{n,2}(t)} \int_0^T \bigg(\sum_{n \in \mathcal{N}} k_n (P^*_{n,1}(t)+P_{n,2}(t)) \bigg) dt   \label{eqn_OptPower_neweq3a} \\
s.t. && \frac{1}{T}\int_0^T \bigg( P^*_{n,1}(t)+P_{n,2}(t)\bigg) dt \leq P_{n,{\rm avg}}~~ \forall n \in \mathcal{N} \label{eqn_OptPower_neweq3b}~~~~~~~\\
&&  B \log_2 \left(1+ \sum_{n\in \mathcal{N}}\kappa_n(t)P^*_{n,1}(t)   + \right. \nonumber\\
&& ~~~~~~~~~~~\left. \sum_{n \in \mathcal{N}}\kappa_n(t)P_{n,2}(t) \right) dt \geq \frac{1}{\tau_{\rm max}} \label{eqn_OptPower_neweq3c}\\
&& \int_0^T B \log_2 \left(1+ \sum_{n \in \mathcal{N}}\kappa_n(t)P^*_{n,1}(t) + \right. \nonumber\\
&& ~~~~~~~~~~~~~~~\left. \sum_{n \in \mathcal{N}}\kappa_n(t)P_{n,2}(t) \right) dt  \geq Q  \label{eqn_OptPower_neweq3d}\\
&&   P^*_{n,1}(t)+ P_{n,2}(t)  \geq 0. \label{eqn_OptPower_neweq3f}
\end{eqnarray}
\end{subequations}
Problem \eqref{eqn_OptPower_neweq3} involves only solving variable $P_{n,2}(t)$. We next rewrite the constraints in \eqref{eqn_OptPower_neweq3} by analyzing the value range of $P_{n,2}(t)$. Because we set $B \log_2 \left(1+ \sum_{n \in \mathcal{N}}  \frac{G P^*_{n,1}(t)}{d^\alpha_n(t)\sigma^2}\right)=\frac{1}{\tau_{\rm max}}$, it is observed that $\sum_{n \in \mathcal{N}}  \kappa_n(t) P_{n,2}(t) \geq 0$ must be satisfied; otherwise, the delay constraint \eqref{eqn_OptPower_neweq3c} is violated. Then, by rewriting constraints \eqref{eqn_OptPower_neweq3c} and \eqref{eqn_OptPower_neweq3f} as \eqref{eqn_OptPower_neqeq2c} and \eqref{eqn_OptPower_neqeq2f}, respectively, we have Problem \eqref{eqn_OptPower_neqeq2}. This completes the proof of Lemma \ref{corollary_2}.

\end{proof}

With Lemma \ref{corollary_2}, solving $P_n(t)$ from \eqref{eqn_OptPower_5} under the condition $\frac{T}{\tau_{\rm max}} < Q$ reduces to finding $P_{n,2}(t)$ from \eqref{eqn_OptPower_neqeq2}. The result in Lemma \ref{corollary_2} will be useful in the performance tradeoff analysis in Section \ref{Tradeoff}.

We can easily observe that problem \eqref{eqn_OptPower_neqeq2} is a convex problem. Then, to obtain the optimal solution, we construct an algorithm based on the Karush-Kuhn-Tucker (KKT) conditions. To proceed, firstly, the Lagrangian function is presented as
\begin{equation}\label{eqn_OptPower_18}
\begin{split}
L =  & \int_0^T \bigg(\sum_{n \in \mathcal{N}} \Big( k_n P_{n,2}(t) +  \mu_{1,n} (P_{n,2}(t) - T b_n) \Big) \\
 & -   \mu_2 \Big(B \log_2 (c_n(t) + \sum_{n \in \mathcal{N}} \kappa_n(t)P_{n,2}(t))  - Q \Big) \bigg) dt \\
&  -\mu_3(t) \sum^N_{n=1}  \kappa_n(t) P_{n,2}(t),
\end{split}
\end{equation}
where $\mu_{1,n}$, $\mu_2$ and $\mu_3(t)$ are non-negative multipliers related to the constraints \eqref{eqn_OptPower_neqeq2b}, \eqref{eqn_OptPower_neqeq2d} and \eqref{eqn_OptPower_neqeq2c}, respectively.
To minimize the Lagrangian function, it is necessary to differentiate the Lagrangian function with respect to $P_{n,2}(t)$ and set the derivative to zero for each time $t$, which is,
\begin{equation}\label{eqn_OptPower_19}
\begin{aligned}
 \frac{\partial L} {\partial P_{n,2}(t)} = & k_n +  \mu_{1,n}-\mu_3(t)\kappa_n(t) - \\
&\frac{\mu_2 B}{\log2} \frac{\kappa_n(t) } {c_n(t) + \sum_{n \in \mathcal{N}} \kappa_n(t) P_{n,2}(t)}
= 0.
\end{aligned}
\end{equation}
Through combining with constraint \eqref{eqn_OptPower_neqeq2f}, the solution can be obtained given by
\begin{equation}\label{eqn_OptPower_20}
\begin{aligned}
P_{n,2}(t) = & \left[  \left(\frac{\mu_2 G  B}{\log2 d_n(t)^\alpha \sigma^2 (k_n + \mu_1-\mu_3(t)\kappa_n(t))}  - c_n(t) - \right. \right. \\
& \left. \left. \sum_{m\neq n} \frac{G P_{m,2}(t)} {d_m(t)^\alpha \sigma^2} \right)  \times \frac{d_n(t)^\alpha \sigma^2}{G}, - P^*_{n,1}(t)\right]^+.
\end{aligned}
\end{equation}
\eqref{eqn_OptPower_20} shows that $P_{n,2}(t)$ values with different $n$ are coupled with each other, and the final solution of $P_{n,2}(t)$ can be obtained by iteratively updating them until convergence. During the iteration, Lagrangian multipliers $\mu_{1,n}$, $\mu_2$ and $\mu_3(t)$ can be obtained via the subgradient technique.

Another way to solve \eqref{eqn_OptPower_neqeq2} is to sample the time period to generate certain discrete time points. If the time interval between two adjacent discrete points is small enough, the obtained result can approximately be considered as a solution to \eqref{eqn_OptPower_neqeq2}. Denote the discrete time points as $\{t_1, t_2,\cdots,t_M\}$ and the adjacent sampling point interval as ${\bigtriangleup t}$; the power $P_{n,2}(t_m)$ can be efficiently obtained through solving the following problem:
\begin{subequations}\label{eqn_OptPower_neqeq22}
\begin{eqnarray}
&&\min\limits_{P_{n,2}(t_m)} \sum^M_{m=1}\sum_{n \in \mathcal{N}} k_n P_{n,2}(t_m){\bigtriangleup t}    \label{eqn_OptPower_neqeq22a} \\
s.t. && \sum^M_{m=1} P_{n,2}(t_m){\bigtriangleup t} \leq T b_n ~~ \forall n \in \mathcal{N} \label{eqn_OptPower_neqeq22b}\\
&&  \sum^N_{n=1}  \kappa_n(t_m) P_{n,2}(t_m)  \geq 0 ~~ \forall m\label{eqn_OptPower_neqeq22c}\\
&&  \sum^M_{m=1} B \log_2 \bigg(c_n(t_m) + \sum^N_{n=1} \kappa_n(t_m)   \label{eqn_OptPower_neqeq22d} \\
&&  \times P_{n,2}(t_m) \bigg){\bigtriangleup t}  \geq Q  \nonumber \\
&&   P_{n,2}(t_m) \geq - P^*_{n,1}(t_m) ~ \forall m. \label{eqn_OptPower_neqeq22f}
\end{eqnarray}
\end{subequations}
It is noted that problem \eqref{eqn_OptPower_neqeq22} has only one nonlinear convex constraint \eqref{eqn_OptPower_neqeq22d}, and thus can be efficiently solved by an interior point algorithm using the software CVX \cite{cvx}.

To summarize, the algorithm we proposed to solve \eqref{eqn_OptPower_5} is as follows:

\begin{algorithm}
\caption{\small Dynamic power optimization in problem \eqref{eqn_OptPower_5}}
\begin{itemize}
\item \textbf{Input:} BS interval $d$, speed $v_0$, BS height $h_0$, time interval $(0,T]$, BS coordinate $(l_n, d_0)$, content size $Q$, local storage size $F_n$, cache placement matrix $C$, bandwidth $B$, noise power $\sigma^2$, backhaul cost ratio $\beta$, backhaul rate $R_n(t)$, delay requirement $\tau_{\rm max}$, average power of BS $P_{n,{\rm avg}}$.
\item \textbf{Initialization:} Basis point of the Taylor expansion of $P_n(t)$, that is, $P^0_n(t)$.
\item \textbf{Output:} Dynamic power allocation $P_n(t)$.
\item \textbf{While} not convergence \textbf{do}
\begin{itemize}
  \item Update power $P_n(t)$;
  \begin{enumerate}
    \item Update $P_n(t)$ using Theorem \ref{Lemma_1} if $\frac{T}{\tau_{\rm max}} \geq Q$;
    \item Update $P_n(t)$ using KKT conditions or solving \eqref{eqn_OptPower_neqeq22} if $\frac{T}{\tau_{\rm max}} < Q$;
  \end{enumerate}
  \item Update $P^0_n(t)$ using $P_n(t)$;
\end{itemize}
\item \textbf{End while}
\end{itemize}
\label{Algo}
\end{algorithm}


\section{Cost, Delay and Delivery Content Size Tradeoff Analysis}\label{Tradeoff}

In our system design, our target is to minimize the total network cost subject to the delay constraint and total content delivery constraint. Both constraints actually determine the total cost consumed by the system. In fact, there is inherently a tradeoff between the total network cost, delay and delivery content size. In this section, we present the tradeoff analysis of the system, which may help simplify the design of the system.

We first present the tradeoff between the total network cost and the delay requirement for a given delivery content size.

\begin{prop}\label{proposition1}
In the considered system design problem \eqref{eqn_OptPower_5}, for a given delivery content size $Q$, we have
\begin{itemize}
  \item when $\frac{T}{\tau_{\rm max}} \geq Q$, the total network cost increases as the delay requirement becomes strict.
  \item when $\frac{T}{\tau_{\rm max}} < Q$, if we denote the optimal solution to \eqref{eqn_OptPower_5} as $P^*_n(t)$, which is decomposed into the sum $P^*_{n,1}(t)+P^*_{n,2}(t)$, with $P^*_{n,1}(t)$ being the optimal power term obtained by solving \eqref{eqn_OptPower_6}, there exists a delay requirement region
      \begin{equation}\label{eqn_OptPower_neweq4}
      \begin{aligned}
       {\bm \tau}^{\rm reg}_{\rm max} = \left[\frac{1}{\tau},  \tau_{\rm max}\right],
      \end{aligned}
      \end{equation}
      where $\tau = B\log_2(1+\sum_{n \in \mathcal{N}}  \frac{G P^*_{n,1}(t)}{d^\alpha_n(t)\sigma^2}+ \min_t \sum_{n \in \mathcal{N}}  \frac{G P^*_{n,2}(t)}{d^\alpha_n(t)\sigma^2}  )$ such that the total network cost does not increase with the decrease in delay.
\end{itemize}
\end{prop}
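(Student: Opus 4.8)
The plan is to view the optimal value of \eqref{eqn_OptPower_5} as a function $v(\tau_{\rm max})$ of the delay bound, with all other data held fixed (in particular the coefficients $k_n\ge 1$, which do not involve $\tau_{\rm max}$), and to track how $v$ moves as $\tau_{\rm max}$ shrinks. The first, routine observation is that among the constraints of \eqref{eqn_OptPower_5} only \eqref{eqn_OptPower_5c}, i.e.\ $C(t)\ge 1/\tau_{\rm max}$, involves $\tau_{\rm max}$, and it relaxes as $\tau_{\rm max}$ grows; hence the feasible set is monotonically nondecreasing in $\tau_{\rm max}$ while the objective is unchanged, so $v(\tau_{\rm max})$ is nonincreasing, i.e.\ the cost can only go up when the delay requirement is tightened. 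This already yields the weak half of both bullets, and the remaining content is (i) strictness when $\frac{T}{\tau_{\rm max}}\ge Q$ and (ii) flatness on the stated interval when $\frac{T}{\tau_{\rm max}}<Q$ — the first being the one genuinely delicate point.

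For the regime $\frac{T}{\tau_{\rm max}}\ge Q$: by Theorem~\ref{Lemma_1}, \eqref{eqn_OptPower_5c} is active and \eqref{eqn_OptPower_5d} is redundant, so \eqref{eqn_OptPower_5} is equivalent to \eqref{eqn_OptPower_6}, whose equality \eqref{eqn_OptPower_6c} forces $\sum_{n\in\mathcal N}a_n(t)P_n(t)=2^{1/(B\tau_{\rm max})}-1$ at every $t$, with $a_n(t)=G/(d_n(t)^\alpha\sigma^2)$. I would fix two feasible bounds $\tau_{\rm max}''<\tau_{\rm max}'\ (\le T/Q)$, take an optimal allocation $P_n''(t)$ for $\tau_{\rm max}''$, and test the scaled allocation $P_n'(t)=\lambda\,P_n''(t)$ with the $t$-independent constant $\lambda=(2^{1/(B\tau_{\rm max}')}-1)/(2^{1/(B\tau_{\rm max}'')}-1)\in(0,1)$. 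This $\lambda$ is chosen precisely so that $\sum_n a_n(t)P_n'(t)=2^{1/(B\tau_{\rm max}')}-1$, hence the rate realized by $P_n'$ equals $1/\tau_{\rm max}'$; scaling down only relaxes the average-power constraints \eqref{eqn_OptPower_5b} and keeps $P_n'\ge 0$, so $P_n'$ is feasible for $\tau_{\rm max}'$. Since $P_n''\not\equiv 0$ (it must produce a positive instantaneous rate) and $k_n\ge 1>0$, its cost is strictly positive, so $v(\tau_{\rm max}')\le\lambda\,v(\tau_{\rm max}'')<v(\tau_{\rm max}'')$. This scaling step is the only part requiring a small idea, and the one subtlety — that $\lambda$ may be taken free of $t$ — is exactly what the equality form of \eqref{eqn_OptPower_6c} supplies.

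For the regime $\frac{T}{\tau_{\rm max}}<Q$: by Lemma~\ref{corollary_2} the optimum is $P_n^*(t)=P_{n,1}^*(t)+P_{n,2}^*(t)$ with $P_{n,1}^*$ solving \eqref{eqn_OptPower_6}, so $\sum_n\kappa_n(t)P_{n,1}^*(t)=2^{1/(B\tau_{\rm max})}-1$ is constant in $t$ ($\kappa_n(t)=G/(d_n(t)^\alpha\sigma^2)$). Substituting this constant into the definition of $\tau$ collapses it to $\tau=B\log_2\big(\min_t(1+\sum_n\kappa_n(t)P_n^*(t))\big)=\min_t C^*(t)$, where $C^*(t)$ is the instantaneous rate realized by $P_n^*$; that is, $\tau$ is simply the smallest instantaneous rate the current optimum delivers. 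Then for any $\tau_{\rm max}^{\rm new}\in[\tfrac1\tau,\tau_{\rm max}]$ we have $1/\tau_{\rm max}^{\rm new}\le\tau=\min_t C^*(t)$, so $P_n^*$ still meets the tightened delay constraint $C(t)\ge 1/\tau_{\rm max}^{\rm new}$, while the remaining constraints of \eqref{eqn_OptPower_5} (the average-power bounds, the delivery constraint \eqref{eqn_OptPower_5d}, nonnegativity) do not depend on $\tau_{\rm max}$ and are already satisfied; hence $P_n^*$ is feasible for \eqref{eqn_OptPower_5} with $\tau_{\rm max}^{\rm new}$ and $v(\tau_{\rm max}^{\rm new})\le v(\tau_{\rm max})$. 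Combined with the nonincreasing property from the first paragraph, $v$ is constant on $[\tfrac1\tau,\tau_{\rm max}]$, so the cost does not increase as the delay is tightened within that region; nonemptiness is immediate because feasibility of $P_n^*$ for the original bound gives $C^*(t)\ge 1/\tau_{\rm max}$ for all $t$, whence $\tau\ge 1/\tau_{\rm max}$, i.e.\ $1/\tau\le\tau_{\rm max}$. Here I foresee no real obstacle beyond bookkeeping: once the identity $\tau=\min_t C^*(t)$ is recognized, the conclusion is a one-line feasibility check.
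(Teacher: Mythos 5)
Your proof is correct and follows essentially the same route as the paper: for the regime $\frac{T}{\tau_{\rm max}}<Q$ you show, just as the paper does via its inequality \eqref{eqn_OptPower_neweq5}, that the unchanged optimum $P^*_n(t)$ (with $\sum_n \kappa_n(t)P^*_{n,1}(t)$ constant in $t$, so that $\tau=\min_t C^*(t)$) stays feasible for any tightened bound in ${\bm \tau}^{\rm reg}_{\rm max}$, and you combine this with monotonicity of the optimal value in $\tau_{\rm max}$. The only difference is one of rigor rather than route: for $\frac{T}{\tau_{\rm max}}\ge Q$ the paper simply appeals to the activity of the delay constraint, whereas your uniform scaling by $\lambda=\bigl(2^{1/(B\tau'_{\rm max})}-1\bigr)/\bigl(2^{1/(B\tau''_{\rm max})}-1\bigr)$, which is legitimate because Theorem \ref{Lemma_1} pins the rate to $1/\tau''_{\rm max}$ at every $t$, makes the strict increase explicit.
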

\begin{proof}
Under the condition $\frac{T}{\tau_{\rm max}} \geq Q$, the original system design problem \eqref{eqn_OptPower_5} is equivalent to problem \eqref{eqn_OptPower_6}, where we have only the power constraint \label{eqn_OptPower_6a} and the delay constraint \label{eqn_OptPower_6b}. At the optimal solution, the delay constraint \label{eqn_OptPower_6b} is always active. Hence, if we decrease the value of $\tau_{\rm max}$ to achieve a strict delay requirement, the total network cost must be increased.

If under the condition $\frac{T}{\tau_{\rm max}} < Q$, for a given delay requirement $\tau_{\rm max}$, the optimal solution is $P^*_n(t)$, which is decomposed into the sum $P^*_{n,1}(t)+P^*_{n,2}(t)$, with $P^*_{n,1}(t)$. We first prove that the solution $P^*_n(t)$  with its decomposition $P^*_{n,1}(t)$ and $P^*_{n,2}(t)$ is also the solution to \eqref{eqn_OptPower_5} for a smaller given delay requirement $\tau^\prime_{\rm max} \in {\bm \tau}^{\rm reg}_{\rm max}$. Then, we prove that this solution is the optimal solution.

It is noted that as $B \log_2 \left(1+\sum_{n \in \mathcal{N}} \frac{P^*_{n,1}(t) |h_n(t)|^2}{\sigma^2} \right) = \tau_{\rm max}$, we have $\sum_{n \in \mathcal{N}} \frac{P^*_{n,2}(t) |h_n(t)|^2}{\sigma^2}\geq 0$, as claimed in Lemma \ref{corollary_2}. This further produces
\begin{equation}\label{eqn_OptPower_neweq5}
\begin{aligned}
     &  B\log_2 \left(1+\sum_{n \in \mathcal{N}} \frac{P^*_n(t) |h_n(t)|^2}{\sigma^2} \right)  \\
      \geq & B\log_2 \left(1+ \sum_{n \in \mathcal{N}}  \frac{G P^*_{n,1}(t)}{d^\alpha_n(t)\sigma^2}+ \min_t \sum_{n \in \mathcal{N}}  \frac{G P^*_{n,2}(t)}{d^\alpha_n(t)\sigma^2}  \right)\\
      \geq & \frac{1}{\tau_{\rm max}}.
\end{aligned}
\end{equation}
Thus, if we change the delay requirement $\tau_{\rm max}$ to any smaller value $\tau^\prime_{\rm max}$ in region ${\bm \tau}^{\rm reg}_{\rm max}$, $P^*_n(t)$ can still be a solution to the design problem \eqref{eqn_OptPower_5}, and they have the same total network cost.

In the following, we prove that $P^*_n(t)$ is the optimal solution to the design problem \eqref{eqn_OptPower_5} with a smaller delay requirement $\tau^\prime_{\rm max} \in {\bm \tau}^{\rm reg}_{\rm max}$, $P^*_n(t)$. It is noted that if we decrease the delay requirement $\tau_{\rm max}$ to $\tau^\prime_{\rm max}$, the design problem \eqref{eqn_OptPower_5} with delay requirement $\tau_{\rm max}$ has a smaller feasible region than that of the design problem with delay requirement $\tau^\prime_{\rm max}$. Then, the value of the objective function, i.e., the total network cost, of the former cannot be smaller than that of the latter. This indicates that $P^*_n(t)$ is still the optimal solution to the design problem with delay requirement $\tau^\prime_{\rm max}$, which completes the proof of Proposition \label{proposition2}.
\end{proof}

Next, we discuss the tradeoff between the total network cost and delivery content size for a given delay requirement.

\begin{prop}\label{proposition2}
In the considered system design problem \eqref{eqn_OptPower_5}, for a given delay requirement $\tau_{\rm max}$, we have
\begin{itemize}
  \item when $Q\leq \frac{T}{\tau_{\rm max}}$, increasing the delivery content size will not lead to an increase in the total network cost.
  \item when $Q> \frac{T}{\tau_{\rm max}}$, increasing the delivery content size must also increase the total network cost.
\end{itemize}
\end{prop}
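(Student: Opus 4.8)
The plan is to split on the sign of $Q-\frac{T}{\tau_{\rm max}}$ and dispatch the easy regime by reusing Theorem~\ref{Lemma_1}. When $Q\le\frac{T}{\tau_{\rm max}}$, Theorem~\ref{Lemma_1} already establishes that constraint~\eqref{eqn_OptPower_5d} is redundant and that \eqref{eqn_OptPower_5} is equivalent to \eqref{eqn_OptPower_6}; since \eqref{eqn_OptPower_6} makes no reference to $Q$, its optimal value --- hence the optimal total network cost --- is a constant over the whole interval $\left(0,\frac{T}{\tau_{\rm max}}\right]$, so enlarging $Q$ within this range leaves the cost unchanged and in particular does not increase it. This settles the first bullet.

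For the second bullet, let $\mathcal{C}^\star(Q)$ denote the optimal value of \eqref{eqn_OptPower_5}. Monotonicity $\mathcal{C}^\star(Q_1)\le\mathcal{C}^\star(Q_2)$ for $Q_1<Q_2$ is immediate, because raising $Q$ only shrinks the feasible set through~\eqref{eqn_OptPower_5d}. The substantive task is strict monotonicity, and for that I would first prove that at \emph{every} optimizer of \eqref{eqn_OptPower_5} the delivery constraint~\eqref{eqn_OptPower_5d} holds with equality. Argue by contradiction: if $\int_0^T C(t)\,dt>Q>\frac{T}{\tau_{\rm max}}$ at some optimizer, then $C(t)=\frac{1}{\tau_{\rm max}}$ cannot hold almost everywhere, so there is a positive-measure set $S$ on which $C(t)>\frac{1}{\tau_{\rm max}}$ strictly; writing $C(t)=B\log_2\!\left(1+\sum_{n}\kappa_n(t)P_n(t)\right)$ with $\kappa_n(t)>0$, on a positive-measure subset of $S$ at least one $P_n(t)$ is bounded away from $0$. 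Shaving a small constant $\epsilon$ off that $P_n$ on that subset keeps $P_n\ge 0$ and the average-power bound~\eqref{eqn_OptPower_5b}, preserves $C(t)\ge\frac{1}{\tau_{\rm max}}$ and $\int_0^T C(t)\,dt\ge Q$ for $\epsilon$ small enough (both now have strict slack), and strictly lowers the objective because $k_n\ge 1>0$ (the coefficient $b_n$ in the definition of $k_n$ is nonnegative). This contradicts optimality, so~\eqref{eqn_OptPower_5d} is active at any optimizer.

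Granting the activeness claim, suppose for contradiction that $\mathcal{C}^\star(Q_1)=\mathcal{C}^\star(Q_2)$ for some $\frac{T}{\tau_{\rm max}}<Q_1<Q_2$. An optimizer $P^{(2)}_n(t)$ of the $Q_2$-problem is feasible for the $Q_1$-problem and attains the same cost, hence is also optimal for the $Q_1$-problem; the activeness claim applied to both problems then forces $\int_0^T C(t)\,dt=Q_2$ and $\int_0^T C(t)\,dt=Q_1$ simultaneously along $P^{(2)}_n(t)$, i.e. $Q_1=Q_2$, a contradiction. Hence $\mathcal{C}^\star$ is strictly increasing for $Q>\frac{T}{\tau_{\rm max}}$, which is the assertion. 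I expect the activeness step to be the main obstacle: turning the informal perturbation into a rigorous one requires mild measure-theoretic bookkeeping (extract a subset of $S$ on which both $P_n$ and the delay margin $C(t)-\frac{1}{\tau_{\rm max}}$ are uniformly bounded below, then choose $\epsilon$ accordingly) together with a short case split over which RRH index $n$ is perturbed.
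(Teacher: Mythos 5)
Your proof is correct (up to the routine measure-theoretic bookkeeping you flag yourself), but for the substantive second bullet you take a genuinely different route from the paper. The paper leans on the decomposition of Lemma~\ref{corollary_2}: it writes any solution as $P_n(t)=P^*_{n,1}(t)+P_{n,2}(t)$ with $P^*_{n,1}$ the delay-saturating component from \eqref{eqn_OptPower_6}, observes that a larger $Q$ forces $\sum_n \kappa_n(t)P_{n,2}(t)$ to grow, and closes with a contradiction in which the $Q^{\prime\prime}$-optimal increment ${P^{\prime\prime}}^*_{n,2}$ is scaled by a factor $\alpha<1$ to build a cheaper feasible point for the smaller size $Q^\prime$. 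You instead prove two structural facts about \eqref{eqn_OptPower_5} directly: weak monotonicity of the optimal value in $Q$ (nested feasible sets) and, crucially, that the delivery constraint \eqref{eqn_OptPower_5d} is active at every optimizer when $Q>\frac{T}{\tau_{\rm max}}$, established by a local power-shaving perturbation that exploits $k_n\geq 1>0$; strict monotonicity then follows because a common optimizer of the $Q_1$- and $Q_2$-problems would have to saturate both delivery levels simultaneously. Your activeness-plus-perturbation argument is more self-contained and arguably tighter: it does not presuppose that the $Q^{\prime\prime}$-optimum shares the same $P_{n,1}$ component, and it avoids the feasibility details (delay constraint, sign of $P_{n,2}$, choice of $\alpha$) that the paper's scaling step glosses over. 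What the paper's route buys is consistency with the machinery already built in Lemma~\ref{corollary_2}, making explicit that it is the incremental power $P_{n,2}$ tied to the delivery surplus $Q-\frac{T}{\tau_{\rm max}}$ that drives the cost increase, which is the interpretation used later in the tradeoff discussion. The first bullet is handled essentially as in the paper (redundancy of \eqref{eqn_OptPower_5d} under $\frac{T}{\tau_{\rm max}}\geq Q$, so the optimal cost does not depend on $Q$ in that regime).
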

\begin{proof}
Under the condition $Q\leq \frac{T}{\tau_{\rm max}}$, because constraint \eqref{eqn_OptPower_5d} is the same as \eqref{eqn_OptPower_5}, the system cost is determined only by the delay requirement. Therefore, increasing the delivery content size will not lead the total network cost to increase.

Under the condition $Q> \frac{T}{\tau_{\rm max}}$, based on the result presented in Lemma \ref{corollary_2}, with a constant power $P^*_{n,1}(t)$, we have to increase the value of $\sum_{n \in \mathcal{N}} \frac{P^*_{n,2}(t) |h_n(t)|^2}{\sigma^2}$ to meet constraint \eqref{eqn_OptPower_5d} with larger $Q$. This leads to a larger total network cost. It is noted that when increasing the size of delivery content in \eqref{eqn_OptPower_5}, we cannot find a solution to $P^*_{n,2}(t)$ that satisfies constraint \eqref{eqn_OptPower_5d} while not increasing the value of the objective function in \eqref{eqn_OptPower_5}. We prove this conclusion by using a contradiction statement. Assume that with a delivery content size $Q^\prime$, the optimal solution to \eqref{eqn_OptPower_5} is ${P^\prime}^*_n(t)={P^\prime}^*_{n,1}(t)+{P^\prime}^*_{n,2}(t)$. Now, assume that with a larger delivery content size $Q^{\prime \prime}$, the optimal solution to \eqref{eqn_OptPower_5} is ${P^{\prime \prime}}^*_n(t)={P^\prime}^*_{n,1}(t)+{P^{\prime \prime}}^*_{n,2}(t)$. If ${P^\prime}^*_n(t)$ and ${P^{\prime \prime}}^*_n(t)$ have the same objective function value in \eqref{eqn_OptPower_5}, we can always obtain a new solution to \eqref{eqn_OptPower_5} with a delivery content size $Q^\prime$ as $\tilde{{P^\prime}}^*_n(t)={P^\prime}^*_{n,1}(t)+\alpha {P^{\prime \prime}}^*_{n,2}(t)$, where $\alpha$ is a positive value smaller than $1$. This new solution $\tilde{{P^\prime}}^*_n(t)$ produces a smaller objective function value than ${P^\prime}^*_n(t)$. This contradicts the fact that ${P^\prime}^*_n(t)$ is the optimal solution, which completes the proof of Proposition \ref{proposition2}.
\end{proof}

\section{Invariant Power Optimization with QoS Constraints}\label{Invariant}

As another simple power allocation scheme, we consider a constant power optimization design where power does not vary with the channel. Under this situation, the overall optimization problem can be modified as
\begin{subequations}\label{eqn_problem_Invar_1}
\begin{eqnarray}
\min\limits_{P_n}&&  T \sum_{n \in \mathcal{N}} P_n  +  \label{eqn_problem_Invar_1a} \\
&& \int_0^T \sum_{n \in \mathcal{N}} \beta  || P_n||_0 (1-c_{n,f})R_n(t) d t  \nonumber \\
s.t. && 0\leq P_n \leq P_{n,{\rm avg}}~~ \forall n \in \mathcal{N} \label{eqn_problem_Invar_1b}\\
&&  \frac{1}{ C(t)}  \leq \tau_{\rm max} \label{eqn_problem_Invar_1c}\\
&& \int_0^T C(t) dt  \geq Q  \label{eqn_problem_Invar_1d}\\
&&  P_n(t)\geq 0 ~~ \forall n \in \mathcal{N}, \label{eqn_problem_Invar_5f}
\end{eqnarray}
\end{subequations}
where $C(t) = B \log_2 \left(1+\sum_{n \in \mathcal{N}} \frac{G P_n} {d_n(t)^\alpha \sigma^2} \right)$. Similar to the solution we presented in Section \ref{Dynamic},  we determine $P_n$ by solving the following problem:
\begin{subequations}\label{eqn_problem_Invar_12}
\begin{eqnarray}
&&\min\limits_{P_n}  \sum_{n \in \mathcal{N}}k^\prime_n P_n \label{eqn_problem_Invar_12a} \\
s.t. && 0\leq P_n \leq P_{n,{\rm avg}}~~ \forall n \in \mathcal{N} \label{eqn_problem_Invar_12b}\\
&&  \frac{1}{C(t)}  \leq \tau_{\rm max}  \label{eqn_problem_Invar_12c}\\
&& \int_0^T C(t) dt  \geq Q, \label{eqn_problem_Invar_12d}
\end{eqnarray}
\end{subequations}
where $k^\prime_n = T+ \frac{1}{\log(1/\theta+1)} \frac{\beta  (1-c_{n,f}) \int_0^T R_n(t) d t}{\theta+P^0_n} $, with $P^0_n$ being a basis point of the Taylor expansion.

To solve \eqref{eqn_problem_Invar_12}, two specific cases are considered. If $\frac{T}{\tau_{\rm max}} \geq Q$, we have $C(t)=\frac{1}{\tau_{\rm max}}$. Then, constraint \eqref{eqn_problem_Invar_12d} is redundant. $P_n$ can be found by solving
\begin{subequations}\label{eqn_problem_Invar_2}
\begin{eqnarray}
&&\min\limits_{P_n}  \sum_{n \in \mathcal{N}}k^\prime_n P_n \label{eqn_problem_Invar_2a} \\
s.t. && 0\leq P_n \leq P_{n,{\rm avg}}~~ \forall n \in \mathcal{N} \label{eqn_problem_Invar_2b}\\
&&  \sum_{n \in \mathcal{N}} \frac{G P_n} {d_n(t)^\alpha \sigma^2}  \geq 2^{\frac{1}{\tau_{\rm max} B}}-1. \label{eqn_problem_Invar_2c}
\end{eqnarray}
\end{subequations}
It is noted in \eqref{eqn_problem_Invar_2} that constraint \eqref{eqn_problem_Invar_2c} should be satisfied for any arbitrary $t$, and thus causes \eqref{eqn_problem_Invar_2} to contain infinite constraints.
We next solve \eqref{eqn_problem_Invar_2} as problem \eqref{eqn_OptPower_neqeq22} by sampling the time period at certain discrete time points. Denote the discrete time points $\{t_1, t_2,\cdots,t_M\}$ and the adjacent sampling point interval as ${\bigtriangleup t}$; the power $P_n$ can be efficiently obtained through solving the following linear programming problem:
\begin{subequations}\label{eqn_problem_Invar_222}
\begin{eqnarray}
&&\min\limits_{P_{n,1}}  \sum_{n \in \mathcal{N}}k^\prime_n P_n \label{eqn_problem_Invar_222a} \\
s.t. && 0\leq P_n \leq P_{n,{\rm avg}}~~ \forall n \in \mathcal{N} \label{eqn_problem_Invar_222b}\\
&&  \sum_{n \in \mathcal{N}} \frac{G P_n} {d_n(t_m)^\alpha \sigma^2}  \geq 2^{\frac{1}{\tau_{\rm max} B}}-1, \forall i. \label{eqn_problem_Invar_222c}
\end{eqnarray}
\end{subequations}

If $\frac{T}{\tau_{\rm max}} < Q$, similar to its dynamic counterpart, the power can be denoted as $P_n = P_{n,1}+P_{n,2}$ where $P_{n,1}$ is used to activate the constraint \eqref{eqn_problem_Invar_222c} in \eqref{eqn_problem_Invar_222}. Then, $P_{n,2}$ can be obtained
by using the Lagrangian method or time period sampling approach.

\section{Numerical Results}\label{simulation}

In the following, we provide numerical results to show the superiority of using dynamic power allocation in a high-speed moving transmission scenario. In particular, we compare the performance of dynamic power allocation and invariant power optimization with respect to different caching schemes. The parameter settings for our simulation are summarized in TABLE \ref{table1}. Moreover, we consider three caching strategies to illustrate the effect of caching on the performance, that is, PopC, RndC and the noncaching scheme (NonC). Because the PopC caching strategy asks each RRH to cache the most popular contents until its storage is full, based on our parameter settings, RRH $1$ and RRH $2$ both cache contents $\{1,2,3,4,5\}$. NonC assumes that the RRH has no storage resources and that no content is cached at the RRH.

\begin{table}[h]
\scriptsize
\centering
\caption{Parameter settings in simulations}\label{table1}
\begin{tabular}{|l|l|l|}
\hline
Parameter & Notation & Value \\ \hline
Height of RRH antennas & $h$ & $20 $ m  \\ \hline
Interval between two RRHs & $d$ & $1000 $ m \\ \hline
Distance between RRH and road & $d_0$ & $100 $ m  \\ \hline
Coordinates of RRH $1$ & $(l_1, d_0)$ & $(-200 ~\rm{m}, 100 ~\rm{m})$ \\ \hline
Coordinates of RRH $2$ & $(l_2, d_0)$ & $(800 ~\rm{m}, 100 ~\rm{m})$ \\ \hline
Path-loss exponent & $\alpha$ & $0.8$  \\ \hline
Channel gain & $G$ & $2$ \\ \hline
Train speed & $v_0$ & $200 ~\rm{Km/h}$ \\ \hline
Ratio of backhaul power cost and rate & $\beta$ & $2.8$ \\ \hline
Noise power & $\sigma^2$ & $1$ \\ \hline
Content number & $L$ & $15$  \\ \hline
Content size & $Q$ & $1$ \\ \hline
RRH storage size & $F_n$ & $5$ \\ \hline
Shaping parameter of content popularity & $\eta$ &  $1$ \\ \hline
\end{tabular}
\end{table}

In Fig.~\ref{Power_Varying}, the convergence behavior of the proposed algorithm is illustrated. We can observe that the proposed algorithm converges fast in no more than five iterations. Three curves with different delay requirements also demonstrate that a decrease in delay can significantly increase the total network power cost.

In Fig. \ref{Channel_and_power}, the dynamic power is illustrated with the time-varying channel. As the train departs from RRH 1 and moves closer to RRH 2, we see that the channel gain of $h_1(t)$ decreases as time passes, while the channel gain of $h_2(t)$ increases with time. To satisfy the QoS requirements, the power at RRH $1$ gradually increases to compensate for the channel gain loss of $h_1(t)$ while the power at RRH $2$ remains zero. As the train moves closer to RRH $2$, RRH $2$ begins to serve it, and the power at RRH $1$ becomes zero to maintain a lower network power cost. In particular, when the train gets close to RRH $2$, the required power at RRH $2$ gradually decreases as the quality of channel $h_1(t)$ improves.

\begin{figure}
 	\centering
 	\includegraphics[width= 3 in]{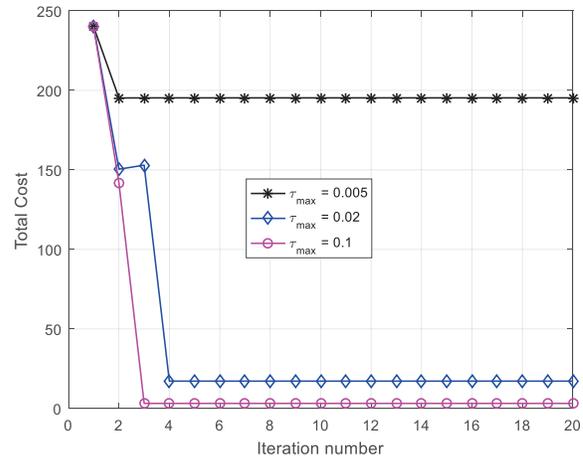}  \vspace{-2mm}
 	\caption{Convergence behavior of the proposed Algorithm 1 at an average SNR $\frac{P_{1,{\rm avg}}}{\sigma^2} = \frac{P_{2,{\rm avg}}}{\sigma^2}= 10$ dB.} \label{Power_Varying}
 \end{figure}

 \begin{figure}
 	\centering
 	\includegraphics[width= 3 in]{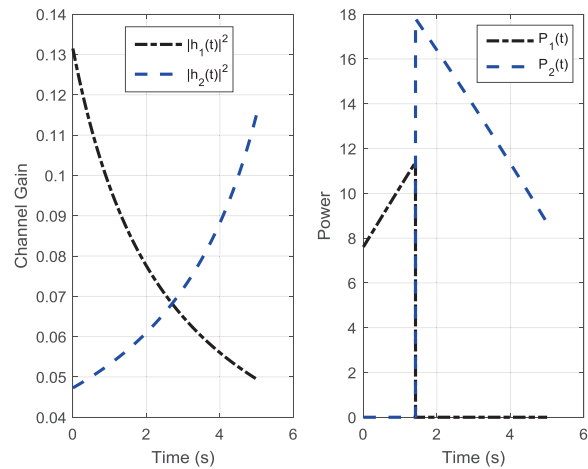}  \vspace{-2mm}
 	\caption{Power variance with a dynamic channel.} \label{Channel_and_power}
 \end{figure}

  \begin{figure}
 	\centering
 	\includegraphics[width= 3 in]{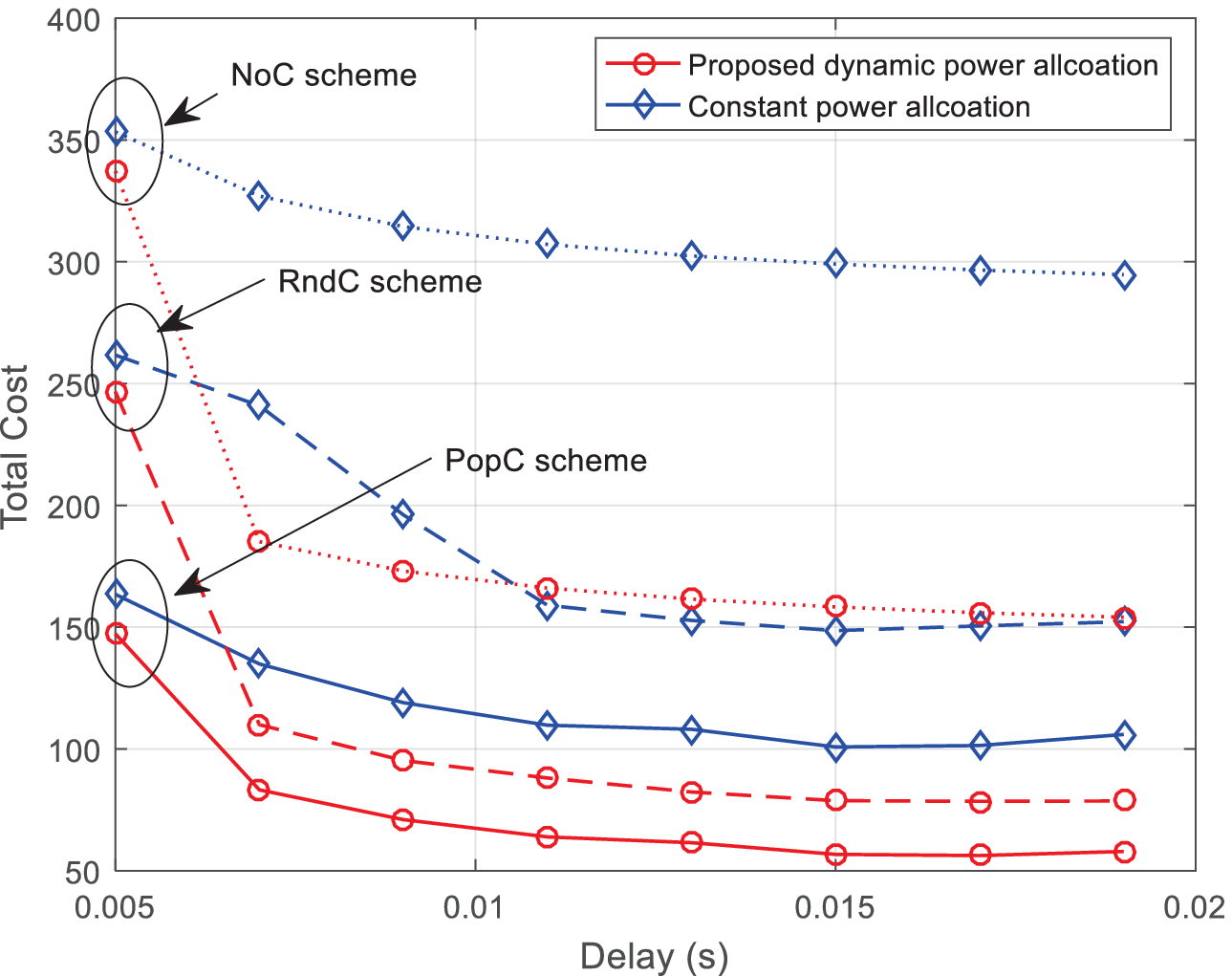}  \vspace{-2mm}
 	\caption{Total power cost for different delay requirements at an average SNR $\frac{P_{1,{\rm avg}}}{\sigma^2} = \frac{P_{2,{\rm avg}}}{\sigma^2}= 10$ dB.} \label{Change_delay}
 \end{figure}

  \begin{figure}
 	\centering
 	\includegraphics[width= 3 in]{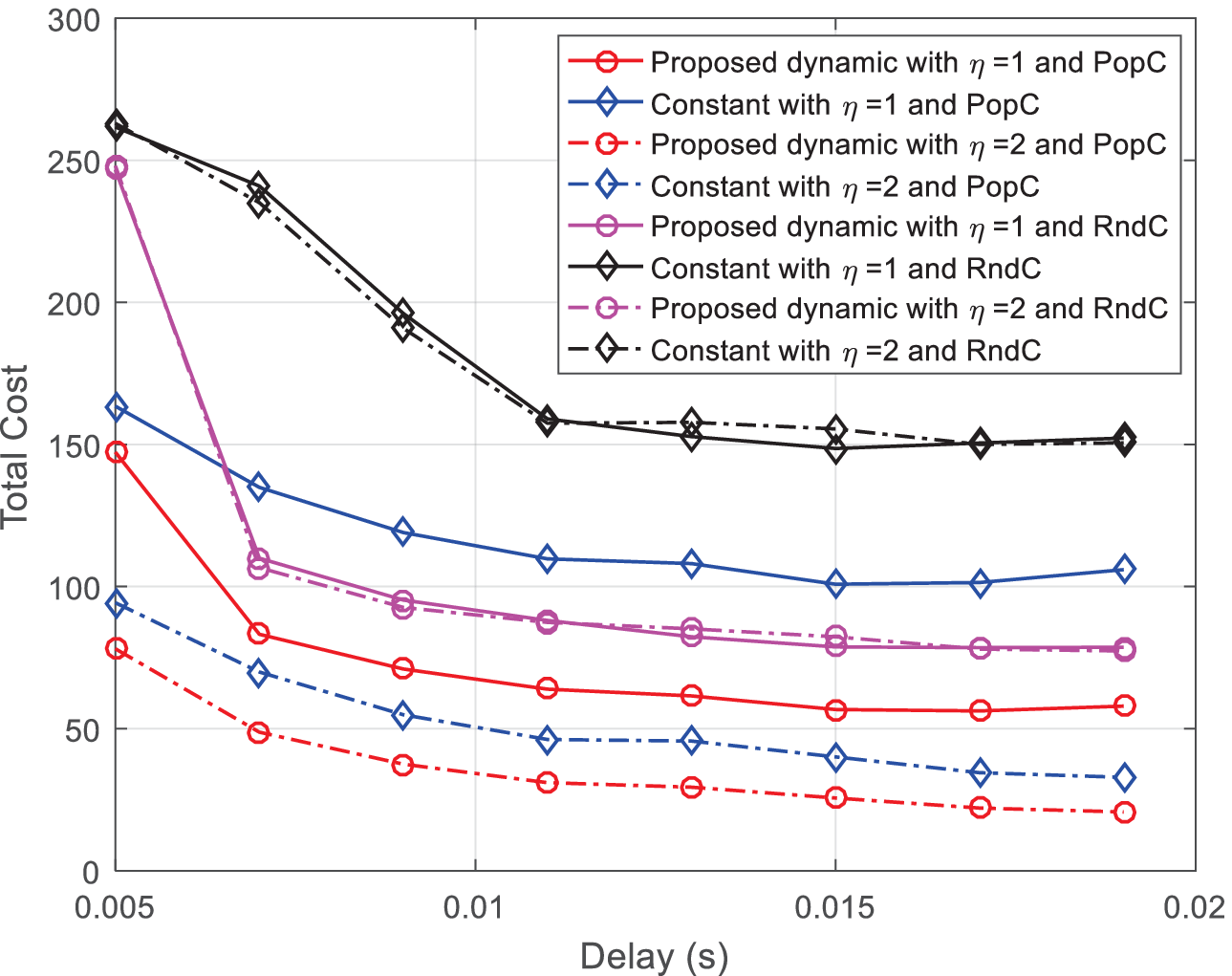}  \vspace{-2mm}
 	\caption{Total power cost for different delay requirements and $\eta$ at an average SNR $\frac{P_{1,{\rm avg}}}{\sigma^2} = \frac{P_{2,{\rm avg}}}{\sigma^2}= 10$ dB.} \label{Change_eta}
 \end{figure}

  \begin{figure}
 	\centering
 	\includegraphics[width= 3 in]{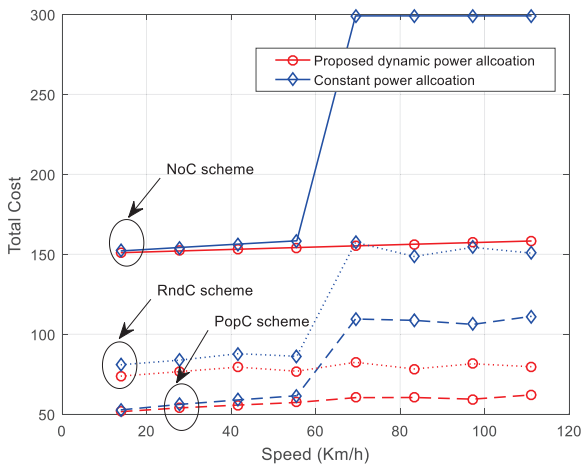}  \vspace{-2mm}
 	\caption{Total power cost for different train speeds at an average SNR $\frac{P_{1,{\rm avg}}}{\sigma^2} = \frac{P_{2,{\rm avg}}}{\sigma^2}= 10$ dB.} \label{Change_speed}
 \end{figure}

In Fig. \ref{Change_delay}, we illustrate the effect of the delay requirements on the total network power cost for different caching strategies. It is observed that a stricter delay requirement enhances the total network power cost and that the proposed dynamic power allocation significantly outperforms the invariant power allocation. Moreover, the PopC caching strategy has the lowest total power cost. The RndC strategy performs worse than the PopC strategy. The NonC scheme has the maximum total power cost. The result implies that caching at the RRH is beneficial for reducing the network power consumption and that caching popular contents is more helpful.

Fig. \ref{Change_eta} further illustrates the effect of the shaping parameter of popularity $\eta$. In general, a larger $\eta$ implies a larger popularity difference among the contents. We observe that a larger $\eta$ produces a lower total network power for the PopC caching strategy. This observation is reasonable as the request contents are very likely to be cached at the RRHs, which can reduce the backhaul power consumption. However, for the RndC strategy, we see that the change in $\eta$ has little effect on the total network power cost. The main reason is that the RndC strategy does not consider the content popularity and caches all contents with equal probabilities.

In Fig. \ref{Change_speed}, we show the effect of train speed on the total network power consumption. It is observed that the total network power cost increases with the speed, especially for dynamic power allocation. In the given time period $(0, T]$, a higher speed implies that a longer distance will be covered by the train. This increases the total power consumption.

\section{Conclusions}\label{conclusion}
In this paper, we studied the dynamic power allocation of the Fog-RAN-assisted high-speed railway system. For a given caching strategy, we optimized the instantaneous power allocation at the RRHs with the aim to minimize the network power consumption subject in total to several QoS constraints. By analyzing the dynamic power optimization problem, we derived the analytical power solution. Our results showed that caching at the RRHs can significantly reduce the total network power consumption. More so, the dynamic power allocation is significantly superior to the invariant one, as it takes the time-varying characteristic of the channel into consideration.

\appendices
\section{Feasibility analysis of problem \eqref{eqn_OptPower_6new3}}

\begin{lemma}\label{corollary_appendix}
For problem \eqref{eqn_OptPower_6new3}, if $\tilde{t}^\prime> t^{\prime  \prime}$, the optimization problem is infeasible.
\end{lemma}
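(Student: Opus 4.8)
The plan is to bound the left-hand side of the constraint \eqref{eqn_OptPower_6new3c} over the feasible set of the \emph{remaining} constraints, and to show that when $\tilde{t}^\prime > t^{\prime\prime}$ this bound is strictly smaller than $B$; then \eqref{eqn_OptPower_6new3b}, \eqref{eqn_OptPower_6new3c} and \eqref{eqn_OptPower_6new3d} cannot hold simultaneously, so \eqref{eqn_OptPower_6new3} is infeasible.

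\emph{Step 1: maximizing $\int_0^T \tilde{a}_2(t)P_2(t)\,dt$ over the box and average-power constraints.} I would set $\mathcal{F}=\{P_2:\,0\le P_2(t)\le \tilde{a}_3(t)\ \forall t,\ \int_0^T P_2(t)\,dt\le TP_{2,\rm avg}\}$ and $\bar{P}_2(t)=\tilde{a}_3(t)\mathbf{1}\{t\ge \tilde{t}^\prime\}$. By the defining relation \eqref{eqn_OptPower_6new13} of $\tilde{t}^\prime$, $\int_0^T \bar{P}_2(t)\,dt=\int_{\tilde{t}^\prime}^T \tilde{a}_3(t)\,dt=TP_{2,\rm avg}$, so $\bar{P}_2\in\mathcal{F}$ (incidentally $\tilde{t}^\prime<T$, since the Case-4 hypothesis gives $\int_0^T\tilde{a}_3>TP_{2,\rm avg}>0$). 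For any $P_2\in\mathcal{F}$ put $\Delta=\bar{P}_2-P_2$; then $\Delta\ge 0$ on $[\tilde{t}^\prime,T]$, $\Delta\le 0$ on $(0,\tilde{t}^\prime)$, and $\int_0^T\Delta = TP_{2,\rm avg}-\int_0^TP_2 \ge 0$. Because $\tilde{a}_2(\cdot)$ is increasing on $(0,T]$, on each of the two subintervals $\tilde{a}_2(t)\Delta(t)\ge \tilde{a}_2(\tilde{t}^\prime)\Delta(t)$, hence $\int_0^T\tilde{a}_2\Delta\ge \tilde{a}_2(\tilde{t}^\prime)\int_0^T\Delta\ge 0$. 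Using $\tilde{a}_2\tilde{a}_3=\tilde{a}_0$ this gives $\int_0^T \tilde{a}_2(t)P_2(t)\,dt\le \int_0^T \tilde{a}_2(t)\bar{P}_2(t)\,dt=\int_{\tilde{t}^\prime}^T \tilde{a}_0(t)\,dt$.

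\emph{Step 2: comparing with $B$.} The critical point $t^{\prime\prime}$ of \eqref{eqn_OptPower_6new7} is characterized by $\int_{t^{\prime\prime}}^T \tilde{a}_0(u)\,du=B$, and since $\tilde{a}_0(t)>0$ the map $s\mapsto\int_s^T\tilde{a}_0(u)\,du$ is strictly decreasing. Hence $\tilde{t}^\prime>t^{\prime\prime}$ forces
\[
\int_0^T \tilde{a}_2(t)P_2(t)\,dt\ \le\ \int_{\tilde{t}^\prime}^T \tilde{a}_0(t)\,dt\ <\ \int_{t^{\prime\prime}}^T \tilde{a}_0(t)\,dt\ =\ B
\]
for every $P_2\in\mathcal{F}$, contradicting \eqref{eqn_OptPower_6new3c}. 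Therefore no $P_2$ satisfies all constraints of \eqref{eqn_OptPower_6new3}, i.e., the problem is infeasible, which is the assertion of Lemma \ref{corollary_appendix}.

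\emph{Main obstacle.} The only nonroutine point is the ``bathtub''/rearrangement fact in Step 1 that $\bar{P}_2$ maximizes $\int_0^T\tilde{a}_2 P_2$ over $\mathcal{F}$; I would prove it by the elementary sign–monotonicity exchange argument sketched above rather than by invoking LP duality, and I would verify that strict positivity of $\tilde{a}_0$ and $\tilde{a}_3$ together with the Case-4 hypothesis $\int_0^T\tilde{a}_3>TP_{2,\rm avg}$ rules out the degenerate configurations ($\tilde{t}^\prime$ at an endpoint, or $\tilde{a}_3,\tilde{a}_0$ vanishing on a set of positive measure) for which the exchange step would require extra care.
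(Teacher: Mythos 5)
Your proof is correct, and it rests on the same mechanism as the paper's own argument --- the monotonicity of $\tilde{a}_2(t)$ combined with the defining relations of $\tilde{t}^\prime$ in \eqref{eqn_OptPower_6new13} and of $t^{\prime\prime}$ (which, as you read it, means $\int_{t^{\prime\prime}}^T\tilde{a}_0(t)\,dt=B$; the integration limits printed in \eqref{eqn_OptPower_6new7} are a typo, and your reading is the one consistent with \eqref{eqn_appendix}) --- but the logical organization differs. The paper argues by contradiction around $t^{\prime\prime}$: it posits a feasible $P_2^\prime(t)$ that saturates the weighted constraint $\int_0^T\tilde{a}_2(t)P_2(t)\,dt\ge B$ while staying strictly below the cap $\tilde{a}_3(t)$ on $(t^{\prime\prime},T]$, and uses the increase of $\tilde{a}_2$ to force $\int_0^T P_2^\prime(t)\,dt>TP_{2,{\rm avg}}$, i.e., a violation of the average-power budget. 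You instead drop the weighted constraint, show by the sign--monotonicity exchange that over the box and budget constraints its left-hand side is maximized by the bathtub profile $\tilde{a}_3(t)\mathbf{1}\{t\ge\tilde{t}^\prime\}$, and observe that this maximum, $\int_{\tilde{t}^\prime}^T\tilde{a}_0(t)\,dt$, is strictly below $B=\int_{t^{\prime\prime}}^T\tilde{a}_0(t)\,dt$ whenever $\tilde{t}^\prime>t^{\prime\prime}$ because $\tilde{a}_0>0$. What your route buys is completeness: the extremal bound covers every candidate satisfying the budget and box constraints at once, whereas the paper's contradiction explicitly treats only putative solutions of a special form (equality in the weighted constraint, strict slack against the cap) and asserts the strict inequalities in \eqref{eqn_appendix_1} without justification; the paper's version is shorter and pinpoints the violated constraint as the power budget rather than the weighted one. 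Both arguments establish Lemma \ref{corollary_appendix}, and yours is the tighter write-up.
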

\begin{proof}
With an assumption $\tilde{t}^\prime> t^{\prime  \prime}$, we have $\int_{t^{\prime  \prime}}^T \tilde{a}_3(t) dt >  T P_{2,{\rm avg}}$.
Next, we prove the infeasibility of problem \eqref{eqn_OptPower_6new3} by contradiction analysis. Assume that we have a feasible solution $P^\prime_2(t)$ that satisfies $P^\prime_2(t)< \tilde{a}_3(t)$ for $t\in (t^{\prime  \prime}, T]$ and
\begin{equation}\label{eqn_appendix}
\int_0^{t^{\prime  \prime}} \tilde{a}_2 (t) P^\prime_2(t) dt + \int_{t^{\prime  \prime}}^T \tilde{a}_2 (t) P^\prime_2(t) dt = \int_{t^{\prime  \prime}}^T \tilde{a}_2 (t) \tilde{a}_3(t) dt = B.
\end{equation}
Because $\tilde{a}_2 (t)$ is an increasing function, when \eqref{eqn_appendix} is satisfied, although $\int_{t^{\prime  \prime}}^T P^\prime_2(t)dt <  T P_{2,{\rm avg}}$, we have
\begin{equation}\label{eqn_appendix_1}
\begin{split}
\int_0^{t^{\prime  \prime}} P^\prime_2(t)dt & > \int_{t^{\prime  \prime}}^T \tilde{a}_3(t) dt -\int_{t^{\prime  \prime}}^T P^\prime_2(t)dt \\
& > T P_{2,{\rm avg}} - \int_{t^{\prime  \prime}}^T P^\prime_2(t)dt,
\end{split}
\end{equation}
which implies that $P^\prime_2(t)$ cannot be a feasible solution. We thus complete the proof of Lemma \ref{corollary_appendix}.
\end{proof}


\bibliographystyle{IEEEtran}
\bibliography{IEEEabrv,Power_HSR_FogRAN}
\end{document}